\newtheorem{theorem}{Theorem}[section]
\newtheorem{thm}{Theorem}
\newtheorem{lemma}[theorem]{Lemma}
\newtheorem{corollary}{Corollary}
\newcommand{\cmark}{\ding{51}}%
\newcommand{\xmark}{\ding{55}}%
\def\eqref#1{equation~\ref{#1}}
\def\1{\bm{1}}
\DeclareMathAlphabet{\mathsfit}{\encodingdefault}{\sfdefault}{m}{sl}
\SetMathAlphabet{\mathsfit}{bold}{\encodingdefault}{\sfdefault}{bx}{n}
\title{High-resolution Multi-spectral Image Guided DEM Super-resolution using Sinkhorn Regularized Adversarial Network}
\author{Subhajit Paul, Ashutosh Gupta \\
Space Applications Centre \\
Indian Space Research Organization\\
Ahmedabad, GJ 380015, India \\
\texttt{\{subhajitpaul, ashutoshg\}@sac.isro.gov.in}
}
\begin{document}

\maketitle

\begin{abstract}
\vspace{-0.4cm}
Digital Elevation Model (DEM) is an essential aspect in the remote sensing domain to analyze and explore different applications related to surface elevation information. In this study, we intend to address the generation of high-resolution (HR) DEMs guided by HR multi-spectral (MX) satellite imagery as prior. To promptly regulate this process, we utilize the discriminator activations as spatial attention for the MX prior, and also introduce a Densely connected Multi-Residual Block (DMRB) module to assist in efficient gradient flow. Further, we present the notion of using Sinkhorn distance with traditional GAN to improve the stability of adversarial learning. In this regard, we provide both theoretical and empirical substantiation of better performance in terms of vanishing gradient issues and numerical convergence. We demonstrate both qualitative and quantitative outcomes with available state-of-the-art methods. Based on our experiments on DEM datasets of Shuttle Radar Topographic Mission (SRTM) and Cartosat-1, we show that the proposed model performs preferably against other benchmark methods. We also generate and visualize several high-resolution DEMs covering terrains with diverse signatures to show the performance of our model.
\end{abstract}
\setlength{\abovedisplayskip}{2.5pt}
\setlength{\belowdisplayskip}{2.5pt}
\section{Introduction}
\vspace{-0.3cm}
The Digital Elevation Model (DEM) is a digital representation of any three-dimensional surface. It is immensely useful in precision satellite data processing, geographic information systems~\citep{60}, hydrological studies~\citep{70}, urban planning~\citep{50}, and many other key applications. Due to its diverse applications, the accuracy and resolution of DEM have a substantial impact in different fields of operations~\citep{3,4}. The major sources of high-resolution (HR) elevation models are terrestrial and airborne systems with restricted coverage and they also typically suffer from several issues and systematic errors~\cite{1,2}. Hence, accurate HR DEM products are expensive, as they require special acquisition and processing techniques. As an alternative, enhancing the resolution (super-resolution) of existing DEMs can be seen as the most optimal strategy to address the shortfall. \par
Research on DEM super-resolution (SR) is limited despite its significance in remote sensing applications. Generally, traditional methods like linear, and bicubic interpolation are widely used for DEM SR, but they tend to produce smoothed outputs at high-frequency regions~\citep{8}. Reconstruction-based methods like steering kernel regression (SKR)~\citep{9} or non-local means (NLM)~\citep{10}, have been proposed to tackle this, however, they still under-perform at a large magnification factor. After the first introduction of SR using Convolutional Neural Network (SRCNN)~\citep{5}, its variant D-SRCNN was proposed by \citet{20} to address the DEM super-resolution problem which attains performance gain over the traditional methods. Later, with the introduction of Genrative Adversarial Networks (GANs)~\citep{goodfellow2016deep} and its variants in SR applications like SR using GANs (SRGAN)~\citep{6}, \citet{23} proposed a DEM super-resolution model, namely D-SRGAN, and later they suggested another model based on EffecientNetV2~\citep{24} for DEM SISR. 

However, monocular depth SR using natural HR images as guide is an emerging research field in computer vision. Some of the pioneering works in this domain are  Deformable Kernel Networks (DKN) and Faster DKN (FDKN)~\citep{506}, Fast Depth map SR (FDSR)~\citep{507}, and Deep Anisotropic Diffusion Adjustment (DADA)~\citep{508}.
%which learn sparse and spatially invariant filter kernels. Later, \citet{507} exerts a high-frequency guided module to embed the guide details in the depth map. Recently, \citet{508} achieved state-of-the-art performance by adapting the concept of guided anisotropic diffusion with CNNs.
Inspired by these, we propose a DEM SR framework that effectively utilizes information from an HR multi-spectral (MX) image guide by conditioning it with a discriminative spatial self-attention. In this regard, we also propose a new adversarial learning framework, namely SIRAN (Sinkhorn Regularized Adversarial Network), as well as generate our own dataset by using realistic coarse resolution data instead of bicubic downsampled. In next section, we briefly discuss the methodology.

\begin{figure*}
    \centering
    \includegraphics[width=\textwidth]{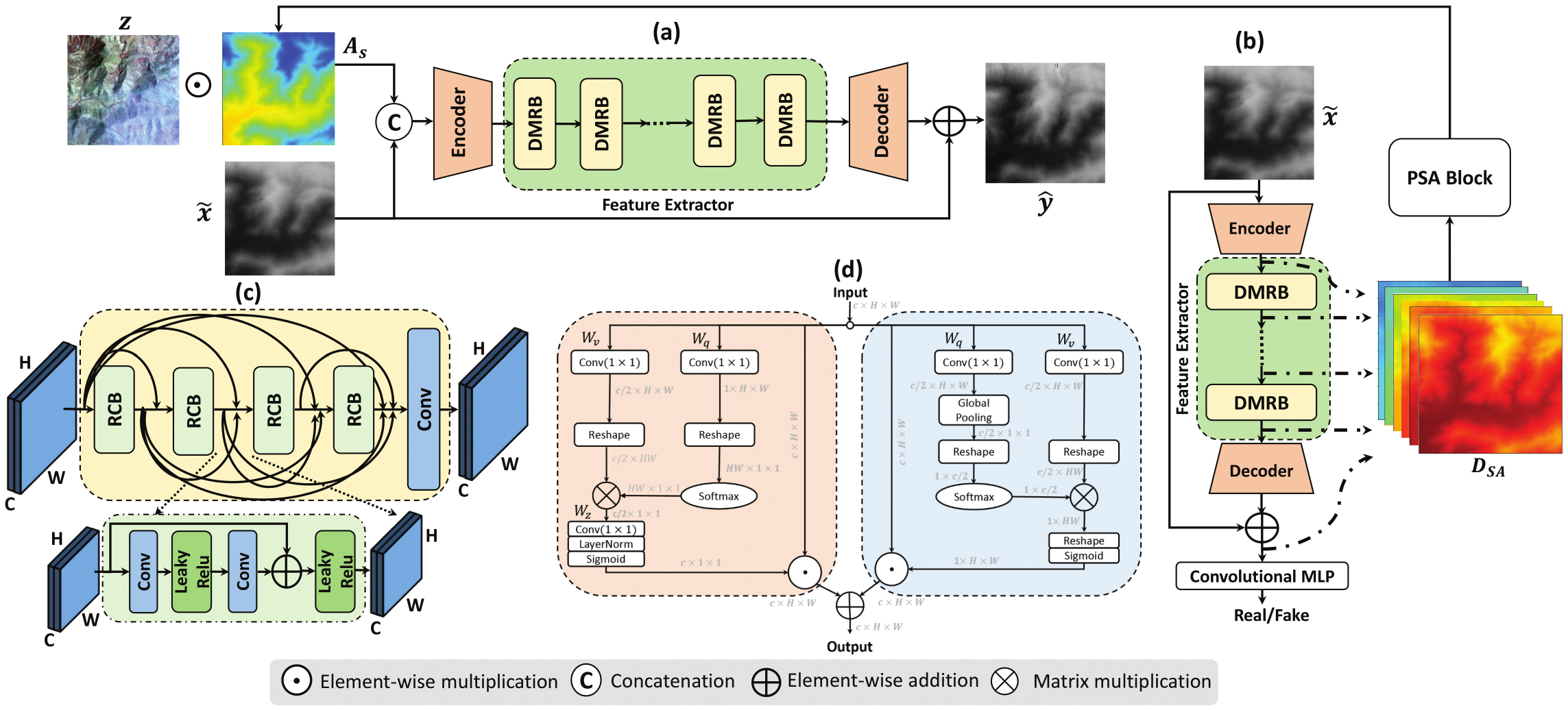}
    \vspace{0.2cm}
    \caption{Overview of the proposed adversarial framework. (a) The generator $\mathbf{G}$ takes discriminative spatial attention from (b) discriminator $\mathbf{D}$ as conditional input, paased via a (d) Polarized Self-Attention (PSA) block. Both $\mathbf{G}$ and $\mathbf{D}$ constitute of (c) Densely connected Multi Residual Blocks (DMRBs) with residual convolution block (RCB) as the building block.}
    \label{fig:fig2}
\end{figure*}
\vspace{-0.3cm}
\section{Methodology}\label{method}
\vspace{-0.3cm}
In Figure \ref{fig:fig2}, we have illustrated a detailed architectural overview of our framework. The generator $\mathbf{G}$ operates on upsampled coarser resolution DEM $\tilde{x}$, MX image prior $z$, consisting of false color composite (FCC) of $NIR$ (R), $R$ (G) and $G$ (B) bands, and polarized self-attention (PSA)~\citep{29} of discriminator spatial feature maps, $A_s$ as conditional input. Let $z \sim \mathbb{P}_{Z}$, where $z \in \mathbb{R}^{H \times W \times 3}$ with $\mathbb{P}_{Z}$ being the joint distribution of FCC composition and $\tilde{x}\sim \mathbb{P}_{\tilde{x}}$, where $\mathbb{P}_{\tilde{x}}$ constitute of upsampled coarser resolution DEM with $\tilde{x} \in \mathbb{R}^{H \times W}$. Let $\hat{y}\sim \mathbb{P}_{\mathbf{G}_\theta}$, where $\hat{y} = \mathbf{G}(\tilde{x},z\odot A_s)$, where $\odot$ denotes the element-wise multiplication and $\mathbb{P}_{\mathbf{G}_\theta}$ denotes the generator distribution parameterized by $\theta \in \Theta$. Let $y \sim \mathbb{P}_{y}$ with $\mathbb{P}_{y}$ represent the target HR DEM distribution. The discriminator $\mathbf{D}$ classifies $y$ and $\hat{y}$ to be coming from real or fake sample space and is assumed to be parameterized by $\psi \in \Psi$. 
\vspace{-0.3cm}
\subsection{Network Architecture}
\vspace{-0.25cm}
We design both $\mathbf{G}$ and $\mathbf{D}$ models based on ResNet~\citep{25} and DenseNet~\citep{26}. By combining the idea of skip and dense connections, we design a building block, namely a Densely connected Multi-Residual Block (DMRB) for our overall framework. Each DMRB block is constituted of multiple densely connected Residual Convolution Blocks (RCBs) as shown in Figure \ref{fig:fig2} (c). DMRB enables efficient context propagation and also stable gradient flow throughout the network. We kept the overall design of discriminator $\mathbf{D}$ with a similar configuration as $\mathbf{G}$ with an encoder followed by six DMRBs and finally a decoder to unravel the generated samples properly as shown in Figure \ref{fig:fig2} (a) and (b), respectively. The discriminator also adds a Multi-Layer Perceptron (MLP) layer to map its latent features into required shape. Another reason behind the design of $\mathbf{D}$ is to extract dense discriminative latent space features as they can be viewed as spatial attention to MX guide. Since, discriminators perform binary classification for a given input, apparently, in latent space, it captures the discriminative features that will help the generator focus on salient parts of the MX guide. \citet{28} introduced this concept of transferring domain-specific latent knowledge of discriminator as spatial attention to the generator. Therefore, utilizing similar concept, $\mathbf{D}$ in our proposed framework has two functional branches which are classification and to approximate the spatial attention maps. We use upsampled coarse resolution DEM $\tilde{x}$ to estimate these attentions. This choice is motivated by the fact that unlike image-to-image translation proposed by \citet{28}, during the test phase, we do not have high-resolution samples in the target domain. For this reason, we use the concept of domain adaptation loss~\citep{30}. These attention maps are passed through a PSA~\citep{29} block to exclude redundant features while highlighting significant areas by extracting dense features in both channel and spatial dimension as shown in Figure \ref{fig:fig2} (d). The main reason behind choosing PSA is due of its capability to retain the high internal resolution compared to other self-attention modules. In the following subsection, we will briefly explain our objective formulation.
\vspace{-0.2cm}
\subsection{Formulation of objective function}\label{loss}
\vspace{-0.2cm}
Our whole framework is set-up based on adversarial learning. In this regard, WGAN and its variants~\citep{300,500} serve the purpose in most of the applications due to their prompt ability to resolve the problems of conventional GAN. However, as they are designed to solve the Kantarovich formulation of OT problems to minimise the Wasserstein distance, they suffer from curse of dimensionality due to their sample complexity of $\mathcal{O}(n^{-2/d})$~\citep{34}, given a sample size $n$ with a dimension $d$. Another key concern of utilizing these adversarial objectives is the vanishing gradient problem near the optimal point. This leads the generator to converge to a sub-optimal solution and results in a partially aligned generated distribution with respect to the true distribution.
% The overall adversarial framework demonstrated in Figure \ref{fig:fig2} is trained by optimizing the minmax objective function of a conventional GAN. The main reason behind this is Wasserstein GAN (WGAN) and its variants~\citep{300,500} are based on solving the Kantarovich formulation of optimal transport (OT) which suffers from the curse of dimensionality due to its sample complexity of $\mathcal{O}(n^{-2/d})$~\citep{34}, given a sample size $n$ with dimension $d$.
Therefore, we regularize the objective function of $\mathbf{G}$ with Sinkhorn loss~\citep{36} as defined below.
\begin{equation}\label{eq17}
\begin{aligned}
\mathcal{S}_{C,\varepsilon} = \mathcal{W}_{C,\varepsilon}(\mu_\theta, \nu) - \frac{1}{2} \mathcal{W}_{C,\varepsilon}(\mu_\theta, \mu_\theta) - \frac{1}{2} \mathcal{W}_{C,\varepsilon}(\nu, \nu),
\end{aligned}
\end{equation}
where, $\mu_{\theta} \in \mathbb{P}_{\mathbf{G}_\theta}$ and $\nu \in \mathbb{P}_{y}$ are measure of generated and true distribution with support included in a compact bounded set $\mathcal{X}, \mathcal{Y} \subset \mathbb{R}^{d}$, respectively, and $\mathcal{W}_{C,\varepsilon}$ is Entropic OT~\citep{505}. Hence, it is not only computationally efficient due to its favorable sample complexity $\mathcal{O}(n^{-1/2})$~\citet{34}, equation \ref{eq17} interpolates between OT loss and MMD loss as $\varepsilon$ varies from $0$ to $\infty$. Therefore, by properly tuning $\varepsilon$ we can leverage the concurrent advantage of non-flat geometric properties of OT loss and, high dimensional rigidity and energy distance properties of MMD loss. Apart from this, the selection of $\varepsilon$ also affects the smoothness of Sinkhorn loss (see proposed Theorem 1 in Appendix \S\ref{theorem1}) which manipulates the overall gradients of $\mathbf{G}$, resulting in better prevention of vanishing gradient problems near the optimal region (see proposed Theorem 2 in Appendix \S\ref{theorem2}), and also provides tighter iteration complexity~\citep{501} (see proposed Theorem 2 in Appendix \S\ref{theorem3}) resulting in faster convergence compared to other GAN setups. Due to these advantages, we regularize the generator loss with Sinkhorn distance and refer it as $\mathscr{L}_{OT}$ that is estimated by Sinkhorn AutoDiff Algorithm~\citep{36} utilizing  $\varepsilon$ and the Sinkhorn iterations $T$ as the major parameters. As Sinkhorn loss also minimizes the Wasserstein distance, it serves the purpose of WGAN to resolve the issues of original GAN more effectively. Hence, we stick to classical GAN objective function $(\mathscr{L}_{ADV})$ for the generator to establish adversarial learning set-up while regularized with Sinkhorn loss. The generator loss also comprises of pixel loss $(\mathscr{L}_{P})$ and SSIM loss $(\mathscr{L}_{str})$. Therefore, the overall generator loss is defined as,
\begin{equation}
\lambda_P\mathscr{L}_{P} + \lambda_{str}\mathscr{L}_{str} + \lambda_{ADV}\mathscr{L}_{ADV} + \lambda_{OT}\mathscr{L}_{OT},
\end{equation}
where $\lambda_P$, $\lambda_{str}$, $\lambda_{ADV}$ and $\lambda_{OT}$ represent the weight assigned to pixel loss, SSIM loss, adversarial loss, and Sinkorn loss respectively. Similarly, the discriminator objective function can be defined as, 
\begin{align}\label{eq7}
\min _\mathbf{D} - \mathbb{E}_{y \sim \mathbb{P}_y}[\log (\mathbf{D}(y)))]& - \mathbb{E}_{\hat{y} \sim \mathbb{P}_{\mathbf{G}_\theta}}[\log(1-\mathbf{D}(\hat{y}))] + \lambda_{DA}\mathscr{L}_{D A},    
\end{align}
where $\lambda_{DA}$ is the assigned weight for domain adaptation loss $(\mathscr{L}_{D A})$~\citep{30} which is used to enforce $\mathbf{D}$ to mimic the latent features of the HR DEM and sharpen the spatial-attention maps provided an upsampled coarse DEM. The details of all losses are described in Appendix \S \ref{loss_def}.
\vspace{-0.35cm}
\section{Experiments and results analysis}
\vspace{-0.35cm}
In this section, we describe our experimental set-up followed by qualitative and quantitative comparison of our model with bicubic as well as other learning-based state-of-the-art methods such as ESRGAN~\citep{23} and  EffecientNetV2~\citep{24}, which also includes recent baseline models for image-guided depth super-resolution like DKN and FDKN~\citep{506}, DADA~\citep{508}, and FDSR~\citep{507}.
\vspace{-0.25cm}
\subsection{Dataset and implementation details}
\vspace{-0.25cm}
Due to lack of realistic DEM SR dataset, we generate our own dataset. For practical application, we opt to use real coarse resolution SRTM DEM with a ground sampling distance (GSD) of 30m as input instead of conventional bicubic downsampled. The HR Indian DEM (GSD=10m) from Cartosat-1 stereoscopic mission is considered our ground truth while the HR MX data (GSD=1.6m) from the Cartosat-2S mission serves as the image guide for our DEM SR task. The DEMs are upsampled to the resolution of MX images using bicubic interpolation to generate a paired dataset. This also helps in increasing the training samples as well as assists the model to learn dense HR features from the prior. The dataset consists of 72,000 paired patches of size $(256,256)$, featuring diverse landscapes such as vegetation, mountains, and, water regions. We use 40,000 samples for training, 20,000 for cross-validation, and 12,000 for testing, including 10,000 patches from the Indian subcontinent and the rest outside India. As ground truth DEM data is only available for Indian regions, our model is trained on limited landscapes. To check its generalization capability, we test our model on data from the Fallbrook region, US, where Cartosat DEM data is unavailable. For them, we validate our result using available 10m DEM data of the 3D Elevation Program (3DEP)~\citep{509}.
% Every experiment is conducted under identical environments. We use $3\times 3$ convolution kernel and leaky ReLU activation except in global skip connection where $1\times 1$ kernel is used without activation. Each DMRB has 64 convolution operations. We use ADAM optimizer with a fixed learning rate of 0.0001. During adversarial training, we update the critic once every single update in the generator. We set $\lambda_{DA} = 0.1$, $\lambda_P=100$, $\lambda_{str}=1$, $\lambda_{ADV}=1$ and $\lambda_{OT}=0.01$. For estimating $\mathscr{L}_{OT}$, we set $T=10$ and $\varepsilon=0.1$. The entire framework is developed using PyTorch. 
For fair comparison, the guided SR models (DKN, FDKN, DADA and FDSR) are trained from scratch with our generated dataset with MX guide image while ESRGAN and  EffecientNetV2 are trained without any guide, as they lack provisions for guide usage. Other implementation details are explained in Appendix \S \ref{impl_det}.
\subsection{Results analysis}
\begin{figure*}
    \centering
    \includegraphics[width=\textwidth]{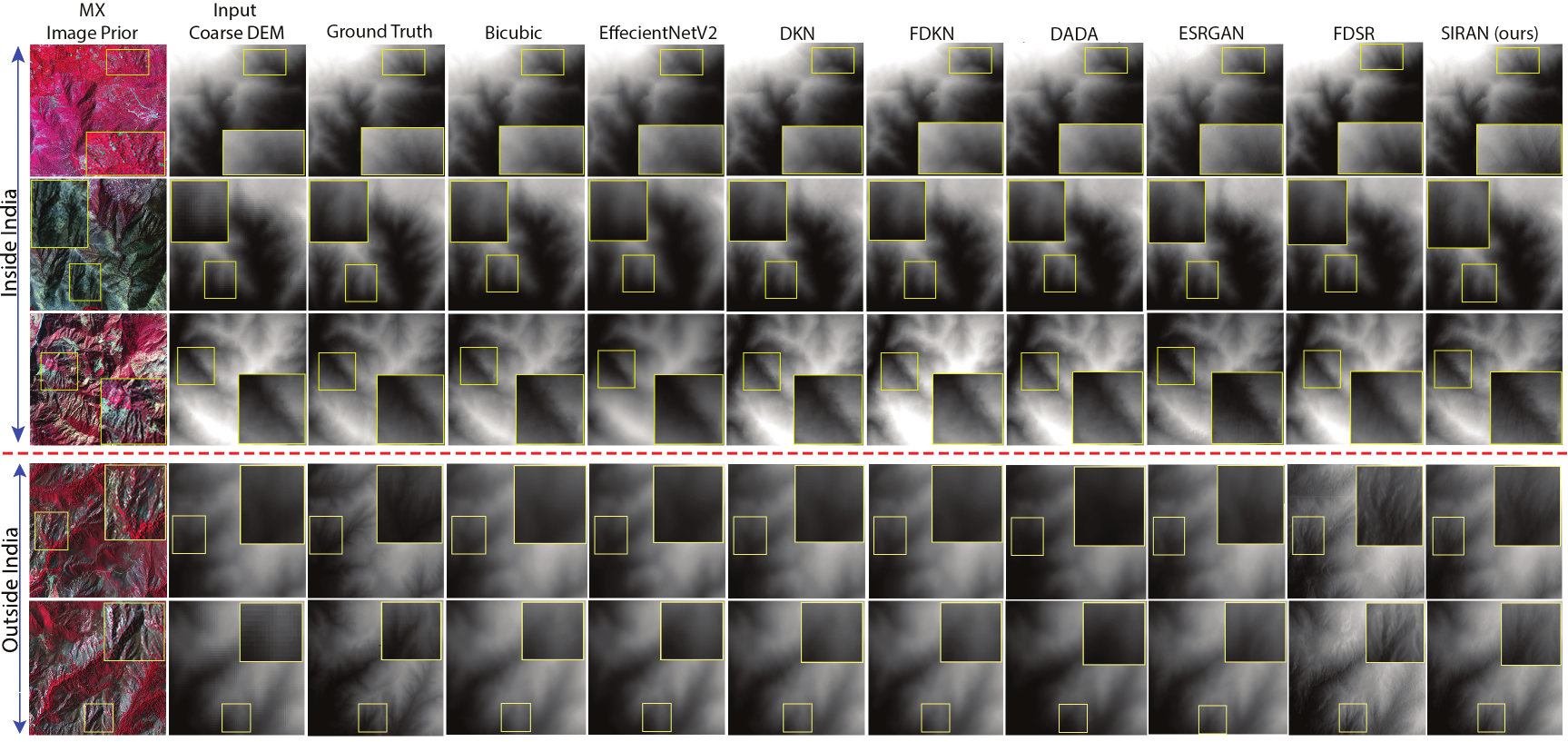}
    \vspace{0.05cm}
    \caption{Results for DEM super-resolution (better viewed at 200\%) for both inside and ouside India data and comparisons with other baseline methods.}
    % \vspace{-0.25cm}
    \label{fig:fig3}
\end{figure*}

\begin{table}[ht]
\vspace{0.25cm}
\caption{Quantitative comparison with state-of-the-art methods for both patches of inside and outside India. First and second methods are highlighted in red and green, respectively.}
\vspace{0.25cm}
\label{tab:my-table}
\resizebox{\textwidth}{!}{%
\begin{tabular}{ccccccccc}
\hline
\textbf{Method}       & \multicolumn{2}{c}{\textbf{RMSE (m)}}                                                              & \multicolumn{2}{c}{\textbf{MAE (m)}}                                                               & \multicolumn{2}{c}{\textbf{SSIM(\%)}}                                                              & \multicolumn{2}{c}{\textbf{PSNR}}                                                                  \\ \hline
\textbf{Dataset}      & \multicolumn{1}{c|}{\textbf{Inside}}                       & \textbf{Outside}                      & \multicolumn{1}{c|}{\textbf{Inside}}                       & \textbf{Outside}                      & \multicolumn{1}{c|}{\textbf{Inside}}                       & \textbf{Outside}                      & \multicolumn{1}{c|}{\textbf{Inside}}                       & \textbf{Outside}                      \\ \hline
\textbf{Bicubic}      & \multicolumn{1}{c|}{15.25}                                 & 23.19                                 & \multicolumn{1}{c|}{12.42}                                 & 22.04                                 & \multicolumn{1}{c|}{71.27}                                 & 66.49                                 & \multicolumn{1}{c|}{30.07}                                 & 27.79                                 \\
\textbf{ENetV2} \cite{24}       & \multicolumn{1}{c|}{20.35}                                 & 30.53                                 & \multicolumn{1}{c|}{18.72}                                 & 28.36                                 & \multicolumn{1}{c|}{69.63}                                 & 60.04                                 & \multicolumn{1}{c|}{31.74}                                 & 25.58                                 \\
\textbf{DKN} \cite{506}         & \multicolumn{1}{c|}{12.89}                                 & 21.16                                 & \multicolumn{1}{c|}{11.18}                                 & 19.78                                 & \multicolumn{1}{c|}{73.59}                                 & 68.45                                 & \multicolumn{1}{c|}{32.09}                                 & 28.22                                 \\
\textbf{FDKN} \cite{506}        & \multicolumn{1}{c|}{13.05}                                 & 21.93                                 & \multicolumn{1}{c|}{11.34}                                 & 20.41                                 & \multicolumn{1}{c|}{74.13}                                 & 66.83                                 & \multicolumn{1}{c|}{32.46}                                 & 27.68                                 \\
\textbf{DADA} \cite{508}        & \multicolumn{1}{c|}{37.49}                                 & 40.89                                 & \multicolumn{1}{c|}{32.17}                                 & 37.74                                 & \multicolumn{1}{c|}{73.32}                                 & 69.86                                 & \multicolumn{1}{c|}{27.94}                                 & 26.78                                 \\
\textbf{ESRGAN} \cite{23}      & \multicolumn{1}{c|}{31.33}                                 & {\color[HTML]{32CB00} \textit{20.45}} & \multicolumn{1}{c|}{25.56}                                 & {\color[HTML]{32CB00} \textit{18.34}} & \multicolumn{1}{c|}{{\color[HTML]{32CB00} \textit{82.48}}} & {\color[HTML]{32CB00} \textit{75.67}} & \multicolumn{1}{c|}{29.88}                                 & {\color[HTML]{32CB00} \textit{29.05}} \\
\textbf{FDSR} \cite{507}          & \multicolumn{1}{c|}{{\color[HTML]{32CB00} \textit{12.98}}} & 30.58                                 & \multicolumn{1}{c|}{{\color[HTML]{32CB00} \textit{10.87}}} & 25.28                                 & \multicolumn{1}{c|}{81.49}                                 & 59.81                                 & \multicolumn{1}{c|}{{\color[HTML]{32CB00} \textit{33.77}}} & 25.59                                 \\
\textbf{SIRAN (ours)} & \multicolumn{1}{c|}{{\color[HTML]{CB0000} \textbf{9.28}}}  & {\color[HTML]{CB0000} \textbf{15.74}} & \multicolumn{1}{c|}{{\color[HTML]{CB0000} \textbf{8.51}}}  & {\color[HTML]{CB0000} \textbf{12.25}} & \multicolumn{1}{c|}{{\color[HTML]{CB0000} \textbf{90.59}}} & {\color[HTML]{CB0000} \textbf{83.90}} & \multicolumn{1}{c|}{{\color[HTML]{CB0000} \textbf{35.06}}} & {\color[HTML]{CB0000} \textbf{31.56}} \\ \hline
\end{tabular}%
}
\end{table}

Figure \ref{fig:fig3} demonstrates the qualitative comparison of our proposed method, where SIRAN highlights key features and retains the perceptual quality with respect to ground truth showcasing its generalization capability. Although ESRGAN and FDSR perform well, ESRGAN tends to produce artifacts and noise while FDSR inpaints unnecessary image details in the generated HR DEMs in outside India data due to low generalization capability. The quantitative comparison presented with respect to RMSE (m), MAE (m), PSNR and SSIM in table \ref{tab:my-table} supports these observations. For both inside and outside India patches, our method captures HR structural details resulting in improvement of more than 24\% in RMSE and MAE, 8\% in SSIM, and 2dB in PSNR. Hence, both quantitative and qualitative analysis suggest that our proposed method SIRAN not only generate high-resolution DEM to the closest proximity of ground truth but also has superior generalization capability for out-of-domain samples. We further validate this by line-profile comparisons as shown in Figure \ref{fig:line} where we show SIRAN's low bias and close adherence to true elevation values. Additionally, in Figure \ref{fig:3D}, we demonstrate 3-D visualization of generated DEMs, corresponding to a region, where ground truth is unavailable. We compare it with available SRTM DEM, and clearly, our topographic view of generated DEM captures sharper features in mountainous regions as well as in the tributaries of the water basin area. Ablation studies related choice of different modules and loss functions are carried out in Appendix \S\ref{abl}.
\begin{figure}
\centering
\minipage{0.48\textwidth}
\centering
  \includegraphics[width=\linewidth]{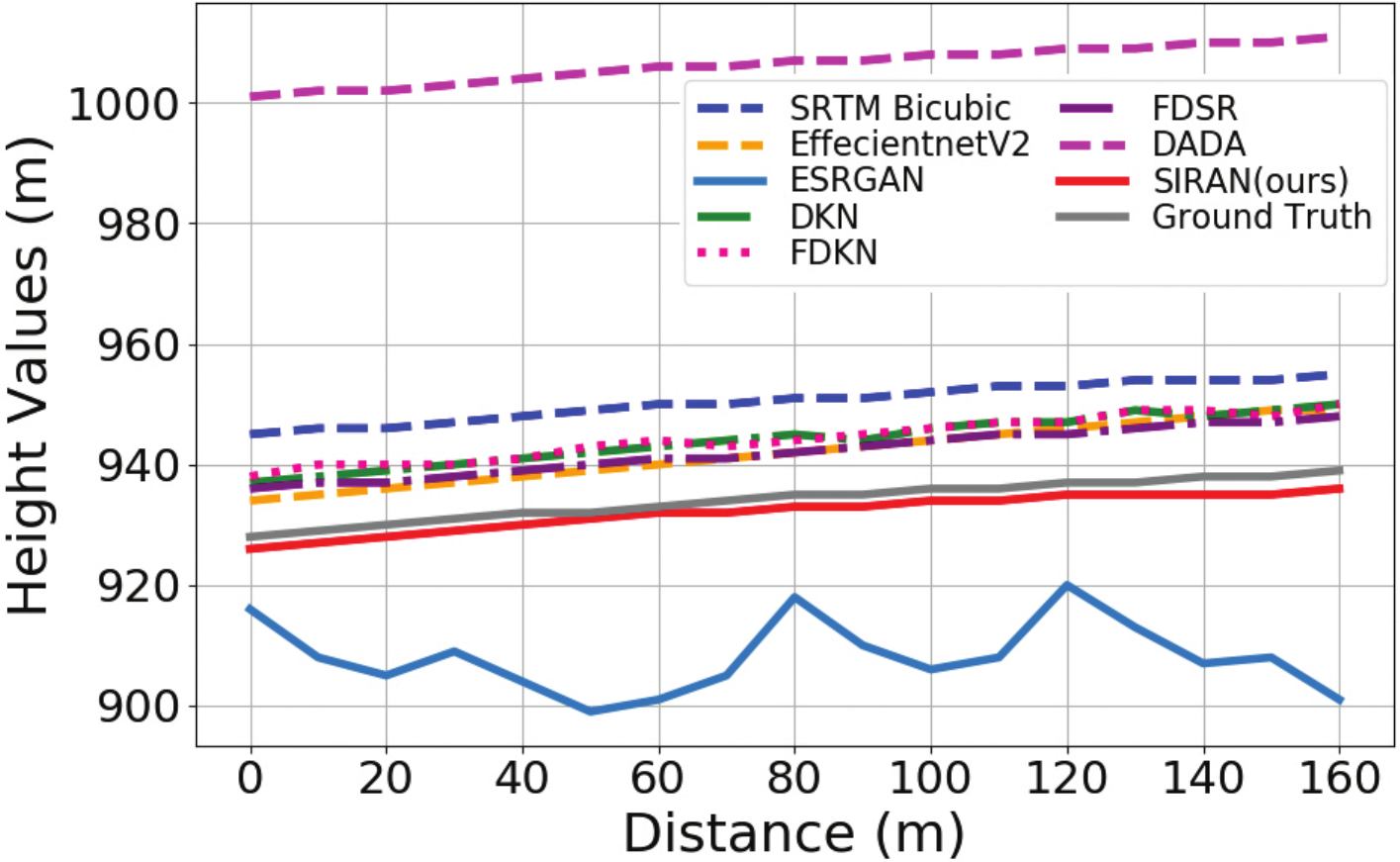}
  \vspace{0.25cm}
  \caption{Line profile analysis of SIRAN and other baselines.}\label{fig:line}
\endminipage 
\hfill
\minipage{0.48\textwidth}
\centering
  \includegraphics[width=0.85\linewidth]{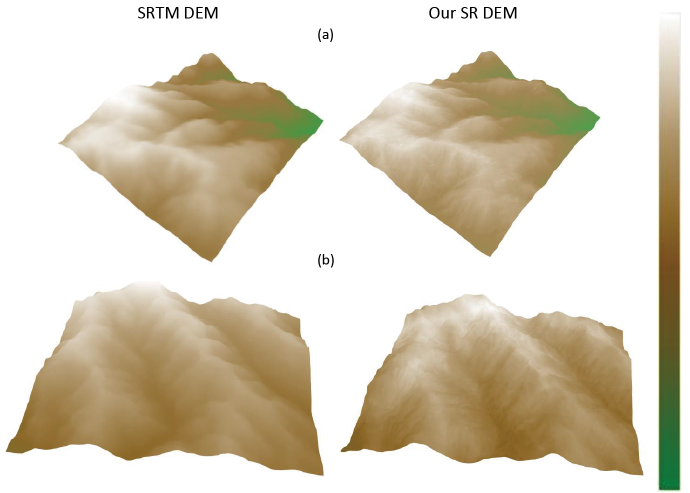}
  \vspace{0.25cm}
  \caption{3-D visualization of Super-resolved
and SRTM DEM (better at 200\%)}\label{fig:3D}
\endminipage
\end{figure}
\section{Conclusion}
% \vspace{-0.35cm}
In this paper, we demonstrate an effective approach for DEM super-resolution using realistic coarse data samples. Unlike regular SISR, the proposed method uses high-resolution MX images as prior in a specially designed architectural framework consisting of spatial attention maps from discriminator, PSA, and DMRBs. We also develop a new GAN set-up based on optimization of Sinkhorn distance regularized adversarial learning. We provide theoretical and empirical evidence to show how this choice stabilizes the training of our adversarial model and improves its convergence. We perform quantitative and qualitative analysis by generating and comparing DEMs related to different signatures along with investigating generalization capability by testing out-of-domain samples. We also analyse how each proposed module affects the outcomes. Our method achieves favorable results compared to other state-of-the-art methods.\par
\textbf{Limitation:} The generated dataset contains samples from only Indian subcontinent regions with limited landscapes and signatures. Therefore, models trained on this dataset may suffer from lack of generalization in presence of certain signatures like urban area. \par
\textbf{Dataset:} A subset of our generated dataset will be available at: \url{https://github.com/subhaISRO/DEM-Super-resolution.git}. The details about the dataset will be provided in the corresponding \url{README.md} file.

% Here, we demonstrate an effective approach for DEM super-resolution using realistic coarse data samples with HR MX guide conditioned on spatial attention maps from discriminator. We also develop a new GAN set-up based on Sinkhorn loss with theoretical and empirical evidence to show how it stabilizes the training and improves its convergence. We perform quantitative and qualitative comparisons by generating DEMs of different signatures while also investigating generalization capability by testing out-of-domain samples. Comparatively, our method achieves favorable results. 
\section{Acknowledgement}
The authors express their sincere gratitude to Shri. Nilesh M Desai, Director, Space Application Centre (SAC) Ahmedabad, Shri. Debajyoti Dhar, Deputy Director, Signal and Image Processing Area (SIPA) and Shri. S Devakanth Naidu, Head, High-resolution Data Processing Division (HDPD) for the encouragement; and other HDPD members for their constant support. We also thank National Remote Sensing Centre (NRSC), ISRO for providing necessary datasets.

\bibliography{iclr2024_conference}
\bibliographystyle{iclr2024_conference}

\newpage
\appendix
\section{Appendix: Smoothness of Sinkhorn loss}\label{theorem1}
\begin{thm}[Smoothness of Sinkhorn loss]
\textit{Consider the Sinkhorn loss $\mathcal{S}_{C,\varepsilon}(\mu_\theta, \nu)$  between two measures $\mu_\theta$ and $\nu$ on $\mathcal{X}$ and $\mathcal{Y}$ two bounded subsets of $\mathbb{R}^{d}$, with a $\mathcal{C}^{\infty}$, $L_0$-Lipschitz, and $L_1$-smooth cost function $C$. Then, for $(\theta_1, \theta_2)\in \Theta$, one has,}
\begin{equation}\label{eq18}
\begin{aligned}
&\mathbb{E}||\nabla_\theta \mathcal{S}_{C,\varepsilon}(\mu_{\theta_1}, \nu) - \nabla_\theta \mathcal{S}_{C,\varepsilon}(\mu_{\theta_2}, \nu)|| \\
&= \mathcal{O}(L(L_1 + \frac{2L_0^2L}{\varepsilon(1+B e^{\frac{\kappa}{\varepsilon}})})) ||\theta_1 - \theta_2||, 
\end{aligned}
\end{equation}
\textit{where $L$ is the Lipschitz in $\theta$ corresponding to $\mathbf{G}$, $\kappa = 2(L_0 |\mathcal{X}| + ||C||_{\infty})$}, $B = d.\max (||m||, ||M||)$ with $m$ amd $M$ being the minimum and maximum values in set $\mathcal{X}$. Let $\Gamma_{\varepsilon}$ represent the smoothness mentioned above, then we get the following asymptotic behavior in $\varepsilon$:
\begin{enumerate}
    \item \textit{as $\varepsilon \rightarrow 0$, $\Gamma_{\varepsilon} \rightarrow \mathcal{O}(\frac{2 L_0^2 L^2}{B \varepsilon e^{\frac{\kappa}{\varepsilon}}})$} 
    \item \textit{as $\varepsilon \rightarrow \infty$, $\Gamma_{\varepsilon} \rightarrow \mathcal{O}(L L_1)$}
    \end{enumerate}
\end{thm}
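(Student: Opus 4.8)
The plan is to differentiate the entropic transport cost through its dual (Sinkhorn) potentials and then to control the regularity of those potentials on the bounded sets $\mathcal{X},\mathcal{Y}$. Write $\xi=(\tilde x,z\odot A_s(\tilde x))$ and $\hat y=\mathbf{G}_\theta(\xi)$, so $\mu_\theta$ is the pushforward of the input law by $\mathbf{G}_\theta$. For each $\theta$ let $(f_\theta,g_\theta)$ be the optimal potentials of $\mathcal{W}_{C,\varepsilon}(\mu_\theta,\nu)$ and $p_\theta$ the symmetric potential of $\mathcal{W}_{C,\varepsilon}(\mu_\theta,\mu_\theta)$; each satisfies a soft-min fixed point such as $f_\theta(x)=-\varepsilon\log\int e^{(g_\theta(y)-C(x,y))/\varepsilon}\,d\nu(y)$, which, since $C\in\mathcal{C}^\infty$ and the domains are compact, has a smooth solution and permits differentiation under the integral. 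By an envelope (Danskin) argument the first-order variations in the potentials vanish at optimality, giving $\nabla_\theta\mathcal{W}_{C,\varepsilon}(\mu_\theta,\nu)=\mathbb{E}_\xi[(\partial_\theta\mathbf{G}_\theta(\xi))^\top\nabla_x f_\theta(\hat y)]$; applying the same reasoning to both slots of the self-term yields $\nabla_\theta\mathcal{S}_{C,\varepsilon}(\mu_\theta,\nu)=\mathbb{E}_\xi[(\partial_\theta\mathbf{G}_\theta(\xi))^\top\nabla_x(f_\theta-p_\theta)(\hat y)]$. I would then write $\nabla_\theta\mathcal{S}$ at $\theta_1$ minus at $\theta_2$ and, by adding and subtracting, split it into (i) the change of the generator Jacobian $\partial_\theta\mathbf{G}$ paired with a potential gradient, (ii) the change of evaluation point $\mathbf{G}_{\theta_1}(\xi)$ versus $\mathbf{G}_{\theta_2}(\xi)$, and (iii) the change of the potential functions $f_\theta-p_\theta$ themselves, which enters only through the displacement of $\mu_\theta$.

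Next I would establish the potential estimates. Differentiating the fixed point gives $\nabla_x f_\theta(x)=\mathbb{E}_{Y\sim\pi_x}[\nabla_x C(x,Y)]$, with $\pi_x$ the conditional of the entropic plan, hence $\|\nabla f_\theta\|\le L_0$ (and likewise for $g_\theta,p_\theta$), and $\nabla^2_x f_\theta(x)=\mathbb{E}_{\pi_x}[\nabla^2_x C(x,Y)]-\tfrac1\varepsilon\mathrm{Cov}_{\pi_x}(\nabla_x C(x,Y))$. The uniform bound on the potentials yields an oscillation bound $\mathrm{osc}(g_\theta-C(x,\cdot))\le\kappa=2(L_0|\mathcal{X}|+\|C\|_\infty)$, so the density of $\pi_x$ relative to the reference measure has ratio at most $e^{\kappa/\varepsilon}$; combined with the coordinate bound $B=d\max(\|m\|,\|M\|)$ this upgrades the crude $\mathrm{Cov}_{\pi_x}(\nabla_x C)\preceq L_0^2 I$ to $\mathrm{Cov}_{\pi_x}(\nabla_x C(x,\cdot))\preceq\frac{L_0^2}{1+Be^{\kappa/\varepsilon}}I$. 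Therefore $\nabla_x(f_\theta-p_\theta)$ is Lipschitz in $x$ with constant $\mathcal{O}(L_1+\frac{L_0^2}{\varepsilon(1+Be^{\kappa/\varepsilon})})$. For the $\theta$-dependence in (iii): a perturbation of $\mu_\theta$ perturbs the Sinkhorn operator, and since $\theta\mapsto\mu_\theta$ is $L$-Lipschitz (as $\mathbf{G}$ is $L$-Lipschitz in $\theta$), an implicit-function / contraction estimate on that operator gives $\|\nabla_x(f_{\theta_1}-p_{\theta_1})-\nabla_x(f_{\theta_2}-p_{\theta_2})\|_\infty=\mathcal{O}(\frac{L_0^2 L}{\varepsilon(1+Be^{\kappa/\varepsilon})})\|\theta_1-\theta_2\|$.

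To finish, I would feed these into the decomposition: the $L$-Lipschitzness of $\mathbf{G}$ bounds $\|\partial_\theta\mathbf{G}\|$ and, against $\|\nabla(f_\theta-p_\theta)\|\le 2L_0$, contributes to the $L_1$-type term in (i); the $x$-Lipschitz (Hessian) bound times $\|\mathbf{G}_{\theta_1}(\xi)-\mathbf{G}_{\theta_2}(\xi)\|\le L\|\theta_1-\theta_2\|$ handles (ii); and the measure-sensitivity bound handles (iii). Taking expectation over $\xi$ collects everything into $\mathcal{O}(L(L_1+\frac{2L_0^2L}{\varepsilon(1+Be^{\kappa/\varepsilon})}))\|\theta_1-\theta_2\|$, the claimed bound. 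The two asymptotics are then immediate: as $\varepsilon\to0$, $1+Be^{\kappa/\varepsilon}\sim Be^{\kappa/\varepsilon}$, so the $\varepsilon$-dependent term behaves like $\mathcal{O}(\frac{2L_0^2L^2}{B\varepsilon e^{\kappa/\varepsilon}})$ (and in fact decays, in contrast with the naive $1/\varepsilon$ blow-up of Sinkhorn smoothness), while as $\varepsilon\to\infty$, $e^{\kappa/\varepsilon}\to1$ and the $1/\varepsilon$ prefactor kills the second term, leaving the MMD-type smoothness $\mathcal{O}(LL_1)$.

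The step I expect to be the main obstacle is the refined covariance bound $\mathrm{Cov}_{\pi_x}(\nabla_x C)\preceq\frac{L_0^2}{1+Be^{\kappa/\varepsilon}}I$ in the second paragraph: this is precisely what replaces the textbook $\frac{L_0^2}{\varepsilon}$ smoothness by something with the favourable $e^{\kappa/\varepsilon}$ in the denominator, and it must be extracted from the bounded geometry of $\mathcal{X}$ and the uniform control of the Sinkhorn potentials rather than from generic concentration arguments. The other delicate points are the envelope-theorem bookkeeping for the self-term $\mathcal{W}_{C,\varepsilon}(\mu_\theta,\mu_\theta)$ (whose two arguments both vary with $\theta$) and the implicit-function estimate quantifying how fast the potentials move when $\mu_\theta$ moves.
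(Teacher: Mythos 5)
Your overall skeleton (dual potentials, Danskin/envelope differentiation, a decomposition into Jacobian, evaluation-point, and potential-shift terms, and the $L_0$-Lipschitz bound on the potentials via the soft-min fixed point) tracks the paper up to the point where the distinctive $\varepsilon(1+Be^{\kappa/\varepsilon})$ denominator must be produced, and there your argument has a genuine gap. You propose to extract it from a refined covariance bound $\mathrm{Cov}_{\pi_x}(\nabla_x C)\preceq \frac{L_0^2}{1+Be^{\kappa/\varepsilon}}I$, justified by the conditional plan $\pi_x$ having density ratio at most $e^{\kappa/\varepsilon}$ against the reference measure. That inference is backwards: an upper bound on a density ratio can only \emph{inflate} a covariance estimate (you integrate a nonnegative integrand against a pointwise larger density), so from the oscillation bound $\kappa$ you can at best conclude something like $\mathrm{Cov}_{\pi_x}(\nabla_x C)\preceq e^{\kappa/\varepsilon}L_0^2 I$, which points in the opposite direction. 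Nothing in the bounded geometry of $\mathcal{X}$ plus uniform control of the potentials manufactures a factor $(1+Be^{\kappa/\varepsilon})^{-1}$ by this route; you correctly flag the step as the main obstacle, but the mechanism you sketch for it cannot close it.

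The paper obtains that factor by a different device and never estimates Hessians of the potentials at all. It writes $\nabla_\theta\mathcal{W}_{C,\varepsilon}(\mu_\theta,\nu)=\mathbb{E}_{\pi^{*}(\theta)}[\nabla_\theta C]$ and reduces the smoothness claim to bounding the difference of optimal couplings $\|\pi_1^{*}-\pi_2^{*}\|$. Using the closed form $\pi^{*}=\mu_\theta\,\nu\,e^{(\phi^{*}+\psi^{*}-C)/\varepsilon}$ together with the marginal constraint $\int_{\mathcal{X}}\pi^{*}\,dx=\nu$, it derives a self-referential inequality of the shape $K\le \lambda_1 B e^{\kappa/\varepsilon}K+\frac{c}{\varepsilon}\|\theta_1-\theta_2\|$ for $K=\|\pi_1^{*}-\pi_2^{*}\|$ and solves for $K$; that is where $(1+Be^{\kappa/\varepsilon})$ lands in the denominator, and a final factor of $3/2$ accounts for the two $\theta$-dependent terms of the Sinkhorn normalization (your handling of the self-term $\mathcal{W}_{C,\varepsilon}(\mu_\theta,\mu_\theta)$ is otherwise fine in spirit). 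To repair your proof you would need either to adopt that coupling-difference argument or to supply an independent proof of your covariance claim; as written, the central inequality of your second paragraph is unsupported and its stated justification does not hold.
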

\begin{proof}
    We will define some of the terminologies, which are necessary for this proof. From equation 6 of the main paper, the entropic optimal transport~\citep{505} can be defined as, 
\begin{equation} \label{sup:eq1}
\begin{aligned}
&\mathcal{W}_{C,\varepsilon}\left(\mu_{\theta},\nu\right) = 
\inf_{\pi \in \Pi\left(\mu_{\theta},\nu\right)} \int_{\mathcal{X} \times \mathcal{Y}} [C\left(\mathbf{G}_{\theta}\left(x\right),y\right)] d\pi\left(\mathbf{G}_{\theta}\left(x\right),y\right)+ \varepsilon I_{\pi}\left(\mathbf{G}_{\theta}\left(x\right),y\right),\\
& \text{where } I_{\pi}\left(\mathbf{G}_{\theta}\left(x\right),y\right)) = \int_{\mathcal{X} \times \mathcal{Y}}[\log\left(\frac{\pi\left(\mathbf{G}_{\theta}\left(x\right),y\right)}{\mu_{\theta}\left(\mathbf{G}_{\theta}\left(x\right)\right)\nu\left(y\right)}\right)]d\pi\left(\mathbf{G}_{\theta}\left(x\right),y\right),\\
 & \text{s.t.} \int_{\mathcal{X}} \pi\left(\mathbf{G}_{\theta}\left(x\right),y\right) dx = \nu\left(y\right), \: \int_{\mathcal{Y}} \pi\left(\mathbf{G}_{\theta}\left(x\right),y\right) dy = \mu_{\theta}\left(\mathbf{G}_{\theta}\left(x\right)\right) \: \& \: \pi\left(\mathbf{G}_{\theta}\left(x\right),y\right) \geq 0.
\end{aligned}
\end{equation}  
The formulation in equation \ref{sup:eq1} corresponds to the primal problem of regularized OT and, this allows us to express the dual formulation of regularized OT as the maximization of an expectation problem, as shown in equation \ref{sup:eq2}~\citep{510}.

\begin{equation}\label{sup:eq2}
\begin{aligned}
\mathcal{W}&_{C,\varepsilon}\left(\mu_{\theta},\nu\right) = 
\sup_{\phi, \psi \in \Phi}  \int_{\mathcal{X}} \phi\left(\mathbf{G}_{\theta}\left(x\right)\right) \, d\mu_{\theta}\left(\mathbf{G}_{\theta}\left(x\right)\right) + \int_{\mathcal{Y}} \psi\left(y\right) \, d\nu\left(y\right) \\ 
&- \varepsilon \int_{\mathcal{X}\times \mathcal{Y}} e^{\left(\frac{\phi\left(\mathbf{G}_{\theta}\left(x\right)\right) +  \psi\left(y\right) - C\left(\mathbf{G}_{\theta}\left(x\right),y\right)}{\varepsilon}\right)} \, d\mu_{\theta}\left(\mathbf{G}_{\theta}\left(x\right)\right)d\nu\left(y\right)  + \varepsilon\\
\end{aligned}
\end{equation}
where $\Phi = \{\left(\phi,\psi\right) \in \mathcal{C}\left(\mathcal{X}\right) \times \mathcal{C}\left(\mathcal{Y}\right)\}$ is set of real valued continuous functions for domain $\mathcal{X}$ and $\mathcal{Y}$ and they are referred as dual potentials. Now, given optimal dual potentials $\phi^{*}\left(\cdot\right)$, and $\psi^{*}\left(\cdot\right)$, the optmal coupling $\pi^{*}\left(\cdot\right)$ as per \citep{510} can be defined as
\begin{equation}\label{sup:eq3}
    \pi^{*}\left(\mathbf{G}_{\theta}\left(x\right),y\right) = \mu_{\theta}\left(\mathbf{G}_{\theta}\left(x\right)\right) \nu\left(y\right) e^{\frac{\phi^{*}\left(\mathbf{G}_{\theta}\left(x\right)\right) + \psi^{*}\left(y\right) - C\left(\mathbf{G}_{\theta}\left(x\right),y\right)}{\varepsilon}}.
\end{equation}
 To prove \textbf{Theorem 1}, we need an important property regarding its Lipschitz continuity of the dual potentials, which is explained in the following \textbf{Lemma}.
\begin{lemma}
   \textit{If $C\left(\cdot\right)$ is $L_0$ Lipschitz, then the dual potentials are also $L_0$ Lipschitz.}
\end{lemma}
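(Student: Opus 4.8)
The plan is to exploit the fixed-point characterization of the optimal dual potentials coming from the first-order optimality conditions of the dual problem in equation \ref{sup:eq2}. Differentiating the dual objective with respect to $\phi$ and $\psi$ and setting the variations to zero yields the Sinkhorn/Schrödinger equations: for the optimal pair $(\phi^{*},\psi^{*})$ one has, up to the usual normalization,
\begin{equation*}
\phi^{*}(\hat{y}) = -\varepsilon \log \int_{\mathcal{Y}} e^{\frac{\psi^{*}(y) - C(\hat{y},y)}{\varepsilon}}\, d\nu(y),
\end{equation*}
and the symmetric identity for $\psi^{*}$ with the roles of $\mu_\theta$ and $\nu$ swapped. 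These are the so-called $(c,\varepsilon)$-transform relations; I would state them as the starting point, citing the dual formulation already given.

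Next I would show that the right-hand side above, viewed as an operator sending a bounded continuous function $\psi$ to a function of $\hat{y}$, is $L_0$-Lipschitz in $\hat{y}$ whenever $C$ is $L_0$-Lipschitz in its first argument. The key step is the elementary inequality for the soft-min (log-sum-exp) operator: for any two points $\hat{y}_1,\hat{y}_2$,
\begin{equation*}
\left| -\varepsilon\log\!\int e^{\frac{\psi^{*}(y)-C(\hat{y}_1,y)}{\varepsilon}} d\nu(y) + \varepsilon\log\!\int e^{\frac{\psi^{*}(y)-C(\hat{y}_2,y)}{\varepsilon}} d\nu(y)\right| \le \sup_{y}\, |C(\hat{y}_1,y) - C(\hat{y}_2,y)|,
\end{equation*}
which follows because $-\varepsilon\log\int e^{(\cdot)/\varepsilon}$ is a monotone, $1$-Lipschitz (in sup-norm) averaging operation — shifting the exponent uniformly by a constant shifts the output by the same constant, and a bounded perturbation of the exponent perturbs the output by at most the sup of that perturbation. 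Combining this with the hypothesis $|C(\hat{y}_1,y)-C(\hat{y}_2,y)| \le L_0\|\hat{y}_1-\hat{y}_2\|$ gives $|\phi^{*}(\hat{y}_1)-\phi^{*}(\hat{y}_2)| \le L_0\|\hat{y}_1-\hat{y}_2\|$, i.e. $\phi^{*}$ is $L_0$-Lipschitz. The identical argument applied to the symmetric fixed-point equation shows $\psi^{*}$ is $L_0$-Lipschitz.

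The main obstacle is handling the existence and regularity of the optimal potentials rigorously enough that the fixed-point equations are actually valid: one needs $\mathcal{X},\mathcal{Y}$ compact and $C$ continuous (both assumed) so that the entropic dual admits maximizers and the soft-min integrals are finite and well-defined, and one must be slightly careful that the $\log$-sum-exp manipulation is justified (the integrands are bounded away from $0$ and $\infty$ on the compact sets). A secondary subtlety is that the potentials are only determined up to an additive constant $(\phi^{*}+c,\psi^{*}-c)$; this does not affect Lipschitz constants, so I would simply fix a normalization and proceed. Once the fixed-point equations are in hand, the Lipschitz bound itself is a short computation via the soft-min contraction estimate above, so the real work is the setup rather than the estimate.
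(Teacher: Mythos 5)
Your argument is correct and reaches the same bound as the paper, starting from the same place: the softmin/$(c,\varepsilon)$-transform fixed-point equation $\phi^{*}(\hat{y}) = -\varepsilon\log\int_{\mathcal{Y}}\exp\bigl((\psi^{*}(y)-C(\hat{y},y))/\varepsilon\bigr)$, which the paper also takes as its starting point (citing the entropic dual). Where you diverge is in the key estimate. The paper differentiates this identity with respect to $\hat{y}$, observes that $\nabla_{\hat{y}}\phi^{*}(\hat{y})$ is a Gibbs-weighted average of $\nabla_{\hat{y}}C(\hat{y},y)$, and bounds its norm by $L_0$ using $\|\nabla_{\hat{y}}C\|\le L_0$. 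You instead invoke the sup-norm non-expansiveness of the log-sum-exp operator: uniformly perturbing the exponent by at most $c$ moves the softmin by at most $c$, so $|\phi^{*}(\hat{y}_1)-\phi^{*}(\hat{y}_2)|\le\sup_y|C(\hat{y}_1,y)-C(\hat{y}_2,y)|\le L_0\|\hat{y}_1-\hat{y}_2\|$. Both routes are valid; yours is arguably cleaner in that it requires only Lipschitz continuity of $C$ (not differentiability) and avoids justifying differentiation under the integral sign, while the paper's gradient computation is the more standard presentation (and matches the alternative reference it cites). Your remarks on existence of maximizers on compact sets and on the additive-constant gauge freedom are appropriate housekeeping that the paper leaves implicit. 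One cosmetic discrepancy: the paper's softmin integrates against $dy$ rather than $d\nu(y)$ as you write, but this does not affect the Lipschitz constant.
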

\begin{proof}
    Assuming $\hat{y} = \mathbf{G}_{\theta}\left(x\right)$, then $C\left(\hat{y},y\right)$ is $L_0$-Lipschitz in $\hat{y}$. As, the entropy $I_{\pi}\left(\cdot\right)$ is selected as Shannon entropy, according to \citet{33} using the softmin operator, the optimal potential $\phi^{*}\left(\cdot\right)$ satisfy the following equation
    \begin{equation}\label{sup:eq4}
        \phi^{*}\left(\hat{y}\right) = -\varepsilon \ln{\left[\int_{\mathcal{Y}} \exp{\left(\frac{\psi^{*}\left(y\right)-C\left(\hat{y},y\right)}{\varepsilon}\right)}dy\right]}
    \end{equation}
    Now, to estimate the Lipschitz of $\phi^{*}$, we have to find the upper bound of $||\nabla_{\hat{y}} \phi^{*}\left(\hat{y}\right)||$. Hence, taking the gradient of equation \ref{sup:eq4} with respect to $\hat{y}$, the upper-bound of its norm can be written as,
  
    \begin{align}\label{sup:eq5}
         ||\nabla_{\hat{y}} \phi^{*}\left(\hat{y}\right)|| = \frac{||\int_{\mathcal{Y}} \exp{\left(\frac{\psi^{*}\left(y\right)-C\left(\hat{y},y\right)}{\varepsilon}\right)}\nabla_{\hat{y}} C\left(\hat{y},y\right)dy||}{||\int_{\mathcal{Y}} \exp{\left(\frac{\psi^{*}\left(y\right)-C\left(\hat{y},y\right)}{\varepsilon}\right)}dy||}
    \end{align}
       Now due to Lipschitz continuity of $C\left(\hat{y},y\right)$, we can say $\nabla_{\hat{y}} ||C\left(\hat{y},y\right)||\leq L_0$. Hence, using Cauchy-Schwarz inequality we will get,
       \begin{align}\label{sup:eq6}
         ||\nabla_{\hat{y}} \phi^{*}\left(\hat{y}\right)|| \leq ||\nabla_{\hat{y}} C\left(\hat{y},y\right)||
         \frac{||\int_{\mathcal{Y}} \exp{\left(\frac{\psi^{*}\left(y\right)-C\left(\hat{y},y\right)}{\varepsilon}\right)dy||}}{||\int_{\mathcal{Y}} \exp{\left(\frac{\psi^{*}\left(y\right)-C\left(\hat{y},y\right)}{\varepsilon}\right)}dy||} = L_0.
    \end{align}
This completes the proof of the lemma. An alternative proof is provided by \citet{511} in Proposition 4. Similarly, it can be proved for the other potential term.
\end{proof}
For any $\theta_1$, $\theta_2\in \Theta$ will result in different coupling solutions $\pi^{*}_{i}$, for $i=1,2$. Now, based on Danskins' theorem for optimal coupling $\pi^{*}\left(\theta\right)$, we can write
\begin{align}\label{sup:eq7}
    \nabla_{\theta} \mathcal{W}_{C,\varepsilon}\left(\mu_{\theta},\nu\right) = \mathbb{E}_{\mathbf{G}_{\theta}\left(x\right),y \sim \pi^{*}\left(\theta\right)}\left[\nabla_\theta C\left(\mathbf{G}_{\theta}\left(x\right),y\right)\right]
\end{align}
Therefore, for any $\theta_1$ and $\theta_2$, we can write,
\begin{equation}
\begin{aligned}\label{sup:eq77}
    &||\nabla_{\theta} \mathcal{W}_{C,\varepsilon}\left(\mu_{\theta_1},\nu\right) - \nabla_{\theta} \mathcal{W}_{C,\varepsilon}\left(\mu_{\theta_2},\nu\right)|| \leq \\
    & ||\mathbb{E}_{\mathbf{G}_{\theta_1}\left(x\right),y \sim \pi^{*}_1}\left[\nabla_\theta C\left(\mathbf{G}_{\theta_1}\left(x\right),y\right)\right] - \mathbb{E}_{\mathbf{G}_{\theta_1}\left(x\right),y \sim \pi^{*}_2}\left[\nabla_\theta C\left(\mathbf{G}_{\theta_1}\left(x\right),y\right)\right]|| \\
    &+ ||\mathbb{E}_{\mathbf{G}_{\theta_1}\left(x\right),y \sim \pi^{*}_2}\left[\nabla_\theta C\left(\mathbf{G}_{\theta_1}\left(x\right),y\right)\right] - \mathbb{E}_{\mathbf{G}_{\theta_2}\left(x\right),y \sim \pi^{*}_2}\left[\nabla_\theta C\left(\mathbf{G}_{\theta_2}\left(x\right),y\right)\right]||\\
    & \leq L_0L||\pi^{*}_1 - \pi^{*}_2|| + L_1L||\theta_1 - \theta_2||
\end{aligned}
\end{equation}
Now with respect to different $\theta_i$, for $i=1,2$ with different pair of dual potentials, the $||\pi^{*}_1 - \pi^{*}_2||$ can be written as below. For simplicity we denote $\mu_\theta \equiv \mu_{\theta}\left(\mathbf{G}_{\theta}\left(x\right)\right)$ and $\nu \equiv \nu\left(y\right)$.
\begin{equation}\label{sup:eq8}
    \begin{aligned}
    ||\pi^{*}_1 - \pi^{*}_2|| &= ||\mu_{\theta_1}\nu \exp{\left(\frac{\phi^{*}\left(\mathbf{G}_{\theta_1}\left(x\right)\right) + \psi^{*}\left(y\right) - C\left(\mathbf{G}_{\theta_1}\left(x\right),y\right)}{\varepsilon}\right)} \\
    &- \mu_{\theta_2}\nu \exp{\left(\frac{\phi^{*}\left(\mathbf{G}_{\theta_2}\left(x\right)\right) + \psi^{*}\left(y\right) - C\left(\mathbf{G}_{\theta_2}\left(x\right),y\right)}{\varepsilon}\right)}||  \\
  &\leq ||\nu \exp{\left(\frac{\phi^{*}\left(\mathbf{G}_{\theta_1}\left(x\right)\right) + \psi^{*}\left(y\right) - C\left(\mathbf{G}_{\theta_1}\left(x\right),y\right)}{\varepsilon}\right)}\left(\mu_{\theta_1}-\mu_{\theta_2}\right)|| \\
  & + ||\mu_{\theta_2}\nu \left[\exp{\left(\frac{\phi^{*}\left(\mathbf{G}_{\theta_1}\left(x\right)\right) + \psi^{*}\left(y\right) - C\left(\mathbf{G}_{\theta_1}\left(x\right),y\right)}{\varepsilon}\right)}\right.\\
  &-\left.\exp{\left(\frac{\phi^{*}\left(\mathbf{G}_{\theta_2}\left(x\right)\right) + \psi^{*}\left(y\right) - C\left(\mathbf{G}_{\theta_2}\left(x\right),y\right)}{\varepsilon}\right)} \right]|| 
\end{aligned}
\end{equation}
From \citet{34}, we know, as the dual potentials are $L_0$-Lipschitz, $\forall \mathbf{G}_{\theta}\left(x\right)\in \mathcal{X}$, we can write, $\phi^{*}\left(\mathbf{G}_{\theta}\left(x\right)\right)\leq L_0|\mathbf{G}_{\theta}\left(x\right)|$. And from property of c-transform, for $\forall y\in \mathcal{Y}$ we can also write $\psi^{*}\left(y\right)\leq \max_{\mathbf{G}_{\theta}\left(x\right)} \phi^{*}\left(\mathbf{G}_{\theta}\left(x\right)\right) - C\left(\mathbf{G}_{\theta}\left(x\right),y\right)$. We assume $\mathcal{X}$ to be a bounded set in our case, hence, denoting $|\mathcal{X}|$ as the diameter of the space, at optimality, we can get that $\forall \mathbf{G}_{\theta}\left(x\right)\in \mathcal{X}, \: y\in \mathcal{Y}$
\begin{equation}\label{sup:eq10}
\begin{aligned}
    &\Rightarrow \phi^{*}\left(\mathbf{G}_{\theta}\left(x\right)\right) + \psi^{*}\left(y\right) \leq 2L_0|\mathcal{X}| + ||C||_{\infty}\\
    &\Rightarrow \exp{\left(\frac{\phi^{*}\left(\mathbf{G}_{\theta}\left(x\right)\right) + \psi^{*}\left(y\right) - C\left(\mathbf{G}_{\theta}\left(x\right),y\right)}{\varepsilon}\right))} \leq \exp{\left(2\frac{L_0|\mathcal{X}| + ||C||_{\infty}}{\varepsilon}\right)}
\end{aligned}
\end{equation}
 Hence, the exponential terms in equation \ref{sup:eq8} are bounded, and we can assume it has a finite Lipschitz constant $L_{exp}$. Taking $\kappa = 2\left(L_0|\mathcal{X}| + ||C||_{\infty}\right)$, and using Cauchy-Schwarz, we can rewrite equation \ref{sup:eq8} as,
 \begin{equation}\label{sup:eq11}
\begin{aligned}
    ||\pi^{*}_1 - \pi^{*}_2|| &\leq \exp{\left(\frac{\kappa}{\varepsilon}\right)}||\nu||.||\mu_{\theta_1}-\mu_{\theta_2}|| \\
&+L_{exp}||\mu_{\theta_2}||.||\nu||.||\frac{\left(\phi^{*}\left(\mathbf{G}_{\theta_1}\left(x\right)\right)- \phi^{*}\left(\mathbf{G}_{\theta_2}\left(x\right)\right)\right)- \left(C\left(\mathbf{G}_{\theta_1}\left(x\right),y\right)-C\left(\mathbf{G}_{\theta_2}\left(x\right),y\right)\right)}{\varepsilon}||\\
&\leq \exp{\left(\frac{\kappa}{\varepsilon}\right)}||\nu||.||\mu_{\theta_1}-\mu_{\theta_2}|| + 2\frac{L_{exp}L_0L}{\varepsilon}||\mu_{\theta_2}||.||\nu||.||\theta_1-\theta_2|| 
\end{aligned}
\end{equation}
Now, as the input space $\mathcal{X}$ and output space $\mathcal{Y}$ are bounded, the corresponding measures $\mu_\theta$ and $\nu$ will also be bounded. We assume, $||\mu_\theta|| \leq \lambda_1$ and $||\nu||\leq \lambda_2$. If we apply equation \ref{sup:eq10} in equation \ref{sup:eq3}, to get the upper bound of the coupling function, we will get $||\pi^{*}_1 - \pi^{*}_2||\leq \exp{\left(\frac{\kappa}{\varepsilon}\right)}||\nu||.||\mu_{\theta_1}-\mu_{\theta_2}||$ which is less than the bound in equation \ref{sup:eq11}. Then, we can find some constant upper bound of $||\pi^{*}_1 - \pi^{*}_2||$, using the assumed bounds of measures and can write $||\pi^{*}_1 - \pi^{*}_2||\leq \exp{\left(\frac{\kappa}{\varepsilon}\right)}||\nu||.||\mu_{\theta_1}-\mu_{\theta_2}||\leq K$, such that, 
\begin{equation*}
    K \leq  \exp{\left(\frac{\kappa}{\varepsilon}\right)}||\nu||.||\mu_{\theta_1}-\mu_{\theta_2}|| + 2\frac{L_{exp}L_0L}{\varepsilon}||\mu_{\theta_2}||.||\nu||.||\theta_1-\theta_2||
\end{equation*}
Then using the marginal condition as shown in in equation \ref{sup:eq1}, we can write equation \ref{sup:eq11} as,
\begin{equation}\label{sup:eq12}
\begin{aligned}
    K & \leq \lambda_1\exp{\left(\frac{\kappa}{\varepsilon}\right)}||\int_{\mathcal{X}} \pi^{*}_1 dx - \int_{\mathcal{X}} \pi^{*}_2 dx|| + 2\lambda_1 \lambda_2 \frac{L_{exp}L_0L}{\varepsilon}||\theta_1-\theta_2||\\
    & \leq \lambda_1\exp{\left(\frac{\kappa}{\varepsilon}\right)} \int_{\mathcal{X}} ||\pi^{*}_1 - \pi^{*}_2||.|dx| + 2\lambda_1 \lambda_2 \frac{L_{exp}L_0L}{\varepsilon}||\theta_1-\theta_2||\\
    & \leq \lambda_1\exp{\left(\frac{\kappa}{\varepsilon}\right)}K \int_{\mathcal{X}} |dx| +2\lambda_1 \lambda_2 \frac{L_{exp}L_0L}{\varepsilon}||\theta_1-\theta_2|| 
\end{aligned}
\end{equation}
The input set is a compact set such that $\mathcal{X} \subset \mathbb{R}^d$. So, assuming $m$ and $M$ to be the minimum and maximum value in set $\mathcal{X}$ and considering the whole situation in discrete space, equation \ref{sup:eq12}, can be rewritten as,
\begin{equation}\label{sup:eq13}
\begin{aligned}
    K &\leq \lambda_1\exp{\left(\frac{\kappa}{\varepsilon}\right)}K \sum_{x\in \mathcal{X}} |x| + 2\lambda_1 \lambda_2 \frac{L_{exp}L_0L}{\varepsilon}||\theta_1-\theta_2||\\
    & \leq \lambda_1\exp{\left(\frac{\kappa}{\varepsilon}\right)}Kd\max\left(||M||,|||m|\right) + 2\lambda_1 \lambda_2 \frac{L_{exp}L_0L}{\varepsilon}||\theta_1-\theta_2||, 
\end{aligned}
\end{equation}
Now, taking $B = d\max\left(||M||,|||m|\right)$, and doing necessary subtraction and division on both sides of equation \ref{sup:eq13}, it can be rewritten as
\begin{equation}\label{sup:eq14}
    \begin{aligned}
        K & \leq \frac{2\lambda_1 \lambda_2L_{exp}L_0L}{\varepsilon\left(1-\lambda_1 B\exp{\left(\frac{\kappa}{\varepsilon}\right)}\right)}||\theta_1-\theta_2||\\
    & \leq \frac{2\lambda_1 \lambda_2L_{exp}L_0L}{\varepsilon\left(1+\lambda_1 B\exp{\left(\frac{\kappa}{\varepsilon}\right)}\right)}||\theta_1-\theta_2||
    \end{aligned}
\end{equation}
Equation \ref{sup:eq14}, satisfies because $\frac{\kappa}{\varepsilon}\geq0$. As, $||\pi^{*}_1 - \pi^{*}_2||\leq K$, from equation \ref{sup:eq14}, it can be written as 
\begin{equation}\label{sup:eq15}
    ||\pi^{*}_1 - \pi^{*}_2|| \leq \frac{2\lambda_1 \lambda_2L_{exp}L_0L}{\varepsilon\left(1+\lambda_1 B\exp{\left(\frac{\kappa}{\varepsilon}\right)}\right)}||\theta_1-\theta_2||
\end{equation}
Substituting equation \ref{sup:eq15} in equation \ref{sup:eq77}, we will get,
\begin{equation}\label{sup:eq16}
\begin{aligned}
    ||\nabla_{\theta} \mathcal{W}_{C,\varepsilon}\left(\mu_{\theta_1},\nu\right) - \nabla_{\theta} \mathcal{W}_{C,\varepsilon}\left(\mu_{\theta_2},\nu\right)|| &\leq L_0L||\pi^{*}_1 - \pi^{*}_2|| + L_1L||\theta_1 - \theta_2||\\
    & \leq \left(L_1L + \frac{2\lambda_1 \lambda_2L_{exp} L_0^2 L^2}{\varepsilon\left(1+\lambda_1 B\exp{\left(\frac{\kappa}{\varepsilon}\right)}\right)}\right)||\theta_1-\theta_2|| 
\end{aligned}
\end{equation}
So, the EOT problem defined in equation \ref{sup:eq1} has $\hat{\Gamma}_{\varepsilon}$ smoothness in $\theta$ with $\hat{\Gamma}_{\varepsilon} = L_1L + \frac{2\lambda_1 \lambda_2L_{exp} L_0^2 L^2}{\varepsilon\left(1+\lambda_1 B\exp{\left(\frac{\kappa}{\varepsilon}\right)}\right)}$. From this, we can derive the smoothness of Sinkhorn loss defined in equation \ref{eq17}. Note that only the first two terms in equation \ref{eq17} are $\theta$ dependent. Therefore, they only contribute to the gradient approximation and both of them will satisfy the same smoothness condition as defined in equation \ref{sup:eq16}. So, if Sinkhorn loss has smoothness $\Gamma_{\varepsilon}$, it will satisfy,
$\Gamma_{\varepsilon} = \frac{3}{2}\hat{\Gamma}_{\varepsilon}$. In general, we can define the smoothness of Sinkhorn loss with $\left(\theta_1, \theta_2\right) \in \Theta$ as,
\begin{equation}
\begin{aligned}
    ||\nabla_{\theta}S_{C,\varepsilon}\left(\mu_{\theta_1}, \nu\right) - \nabla_{\theta}S_{C,\varepsilon}\left(\mu_{\theta_2}, \nu\right)||\leq \mathcal{O} \left(L_1L + \frac{2L_0^2 L^2}{\varepsilon\left(1+B\exp{\left(\frac{\kappa}{\varepsilon}\right)}\right)}\right)||\theta_1-\theta_2||
\end{aligned}
\end{equation}
This completes the statement of \textbf{Theorem 1}
\end{proof}
\section{Appendix: Upper-bound of expected gradient in SIRAN set-up}\label{theorem2}
\begin{thm}
    \textit{Let $l(\cdot)$, $g(\cdot)$ and $\mathcal{S}_{C,\varepsilon}(\cdot)$ be the objective functions related to supervised losses, adversarial loss and Sinkhorn loss respectively, and $\theta^{*}$ and $\psi^{*}$ be the parameters of optimal generator $\mathbf{G}$ and discriminator $\mathbf{D}$. Let us suppose $l(p,y)$, where $p=\mathbf{G}_{\theta}(x)$, is $\beta$-smooth in $p$. If $||\theta - \theta^{*}||\leq \epsilon$ and $||\psi - \psi^{*}||\leq \delta$, then $||\nabla_\theta \mathbb{E}_{(x,y) \sim \mathcal{X} \times \mathcal{Y}}[l(\mathbf{G}_{\theta}(x),y) + \mathcal{S}_{C,\varepsilon}(\mu_\theta(\mathbf{G}_{\theta}(x)), \nu(y)) - g(\psi; \mathbf{G}_{\theta}(x))]||\leq L^2 \epsilon(\beta + \Gamma_{\varepsilon}) + L\delta$, where $\Gamma_{\varepsilon}$ is the derived smoothness of Sinkhorn loss in equation \ref{eq18}.}
\end{thm}
\begin{proof}
    This proof is inspired by \citet{501}. Assuming $\Gamma = \mathcal{O}\left(L_1 + \frac{2L_0^2}{\varepsilon\left(1+B\exp{\left(\frac{\kappa}{\varepsilon}\right)}\right)}\right)$ be the smoothness in $p$ for Sinkhorn loss $S_{C,\varepsilon}\left(\mu_{\theta}\left(p\right), \nu\left(y\right)\right)$, where $p=\mathbf{G}_{\theta}\left(x\right)$. For simplicity, we use a common set for inputs and outputs as $\mathcal{P}$. Hence, to approximate the gradient of Sinkhorn loss, using Jensen's inequality, we can write,
\begin{equation}
\begin{aligned}\label{sup:eq18}
    &||\nabla_{\theta} \mathbb{E}_{\left(x,y\right)\sim \mathcal{P}}[S_{C,\varepsilon}\left(\mu_{\theta}\left(\mathbf{G}_{\theta}\left(x\right)\right), \nu\left(y\right)\right)]||\\
    &\leq \mathbb{E}_{\left(x,y\right)\sim \mathcal{P}} \left[||\nabla_{\theta} S_{C,\varepsilon}\left(\mu_{\theta}\left(\mathbf{G}_{\theta}\left(x\right)\right), \nu\left(y\right)\right)||\right]\\
     & \leq \mathbb{E}_{\left(x,y\right)\sim \mathcal{P}} \left[ \underbrace{||\nabla_p S_{C,\varepsilon}\left(\mu_{\theta}\left(p\right), \nu\left(y\right)\right)|| . ||\nabla_\theta \mathbf{G}_{\theta}\left(x\right)||}_\textrm{Cauchy-Schwarz inequality}\right]\\
     & \leq L\mathbb{E}_{\left(x,y\right)\sim \mathcal{P}} \left[||\nabla_p S_{C,\varepsilon}\left(\mu_{\theta}\left(p\right), \nu\left(y\right)\right)||\right] 
\end{aligned}
\end{equation}
Say, for optimized parameter $\theta^{*}$, $t= \mathbf{G}_{\theta^{*}}(x)$. Since, $||\theta - \theta^{*}||$, we can write using the smoothness of sinkhorn loss and Lipschitz of model parameters,
\begin{equation}
\begin{aligned}\label{sup:eq19}
    &||\nabla_p S_{C,\varepsilon}\left(\mu_{\theta}\left(p\right), \nu\left(y\right)\right)|| - ||\nabla_t S_{C,\varepsilon}\left(\mu_{\theta^{*}}\left(t\right), \nu\left(y\right)\right)|| \\
    &\leq ||\nabla_p S_{C,\varepsilon}\left(\mu_{\theta}\left(p\right), \nu\left(y\right)\right) - \nabla_t S_{C,\varepsilon}\left(\mu_{\theta^{*}}\left(t\right), \nu\left(y\right)\right)||\\
    & \leq \Gamma ||p-t|| = \Gamma ||\mathbf{G}(\theta)(x) - \mathbf{G}(\theta^{*}(x))||\\
    & \leq \Gamma L ||\theta - \theta^{*}|| \leq \Gamma L \epsilon 
\end{aligned}
\end{equation}
At optimal condition, $||\nabla_t S_{C,\varepsilon}\left(\mu_{\theta^{*}}\left(t\right), \nu\left(y\right)\right)|| = 0$ as the distributions of $y$ and $t = \mathbf{G}_{\theta^*}(x)$ are aligned for optimal $\theta^*$. So, by substituting equation \ref{sup:eq19} in equation \ref{sup:eq18}, we will get
\begin{equation}
\begin{aligned}\label{sup:eq20}
    ||\nabla_{\theta} \mathbb{E}_{\left(x,y\right)\sim \mathcal{P}}\left[S_{C,\varepsilon}\left(\mu_{\theta}\left(\mathbf{G}_{\theta}\left(x\right)\right), \nu\left(y\right)\right)\right]||\leq L^2\Gamma\epsilon 
\end{aligned}
\end{equation}
From Lemma 1 of \citet{501}, we get,
\begin{equation}
\begin{aligned}\label{sup:eq21}
    ||\nabla_{\theta} \mathbb{E}_{\left(x,y\right)\sim \mathcal{P}} \left[l(\mathbf{G}_{\theta}(x),y)\right]||\leq L^2\beta\epsilon
\end{aligned}    
\end{equation}
Similarly, from Lemma 2 of \citet{501}, we get
\begin{equation}
\begin{aligned}\label{sup:eq22}
    ||-\nabla_{\theta} \mathbb{E}_{\left(x,y\right)\sim \mathcal{P}} \left[g(\psi;\mathbf{G}_{\theta}(x))\right]||\leq L\delta 
\end{aligned}    
\end{equation}
Here, $\psi$ is parameters of discriminator $\mathbf{D}$. So using equations \ref{sup:eq20}, \ref{sup:eq21}, and \ref{sup:eq22}, for the combination of losses we will get,
\begin{equation}
    \begin{aligned}
    & ||\nabla_{\theta} \mathbb{E}_{\left(x,y\right)\sim \mathcal{P}} [l(\mathbf{G}_{\theta}(x),y) + S_{C,\varepsilon}\left(\mu_{\theta}\left(\mathbf{G}_{\theta}\left(x\right)\right), \nu\left(y\right)\right) \\
    &-g(\psi;\mathbf{G}_{\theta}(x))]|| \leq ||\nabla_{\theta} \mathbb{E}_{\left(x,y\right)\sim \mathcal{P}} \left[l(\mathbf{G}_{\theta}(x),y)\right]|| \\
    &+ ||\nabla_{\theta} \mathbb{E}_{\left(x,y\right)\sim \mathcal{P}}\left[S_{C,\varepsilon}\left(\mu_{\theta}\left(\mathbf{G}_{\theta}\left(x\right)\right), \nu\left(y\right)\right)\right]|| \\
    &+  ||-\nabla_{\theta} \mathbb{E}_{\left(x,y\right)\sim \mathcal{P}} \left[g(\psi;\mathbf{G}_{\theta}(x))\right]||\\
    &\leq L^2\beta\epsilon + L^2\Gamma\epsilon + L\delta = L^2\epsilon(\beta + \Gamma) +  L\delta
\end{aligned}
\end{equation}
This completes the proof.
\end{proof}
In a regular adversarial set-ups like WGAN and their variants $\epsilon \rightarrow 0$ suggests reductions of $\delta$, which eventually results in a vanishing gradient near the optimal region as shown by \citet{501}. However, in our Sinkhorn regularized adversarial framework, the upper bound is also dependent on $\Gamma_{\varepsilon}$ as exhibited in \textbf{Theorem2}. From equation \ref{eq18}, $\Gamma_{\varepsilon}$ is exponentially variable with respect to the choice of $\varepsilon$. Hence, the selection of adequate $\varepsilon$ will have a profound impact in mitigating the vanishing gradient problem suffered by regular adversarial setup. Experimental results to support this claim is shown in \S \ref{final}.
\section{Appendix: Iteration complexiety of SIRAN}\label{theorem3}
\begin{thm}
    Suppose the supervised loss $l(\theta)$ is lower bounded by $l^{*}>-\infty$ and it is twice differentiable. For some arbitrarily small $\zeta>0$, $\eta>0$ and $\epsilon_1-$stationary point with $\epsilon_1>0$, let $||\nabla g(\psi; \mathbf{G}_{\theta}(x))||\geq \zeta$, $||\nabla\mathcal{S}_{C,\varepsilon}(\mu_\theta(\mathbf{G}_{\theta}(x)), \nu(y))||\geq \eta$ and $||\nabla l(\mathbf{G}_{\theta}(x),y)||\geq \epsilon_1$, with conditions $\delta\leq \frac{\sqrt{2\epsilon_1 \zeta}}{L}$, and $\Gamma_{\varepsilon}<\frac{\sqrt{2\epsilon_1 \eta}}{L^2\epsilon}$, then the iteration complexity in Sinkhorn regularized adversarial framework is upper bounded by $\mathcal{O}(\frac{(l(\theta_0) - l^*)\beta_1}{\epsilon_1^2+2\epsilon_1(\zeta+\eta) - L^2(\delta^2+L^2\Gamma_{\varepsilon}^2\epsilon^2)})$, assuming $||\nabla^2l(\theta)||\leq \beta_1$.
\end{thm}
\begin{proof}
    This proof also follows the steps of Theorem 3 from \cite{501}. In the sinkhorn regularized adversarial setup, the parameters $\theta$ are updated using fixed step gradient descent. They iterate as,
% \begin{equation}\label{sup:eq23}
% \begin{aligned}
%      \theta_{t+1} = \theta_t - h_t \nabla (l(\theta_t)&+S_{C,\varepsilon}\left(\mu_{\theta_t}\left(\mathbf{G}_{\theta_t}\left(x\right)\right), \nu\left(y\right)\right) \\
%      &- g(\psi;\mathbf{G}_{\theta_t}(x)))
% \end{aligned}
%  \end{equation}
\begin{equation}
 \begin{aligned}\label{sup:eq23}
     \theta_{t+1} = \theta_t - h_t \nabla (l(\theta_t)+S_{C,\varepsilon}\left(\mu_{\theta_t}\left(\mathbf{G}_{\theta_t}\left(x\right)\right), \nu\left(y\right)\right) - g(\psi;\mathbf{G}_{\theta_t}(x))).
 \end{aligned}   
\end{equation}
 For simplicity, we denote $S_{C,\varepsilon}\left(\mu_{\theta_t}\left(\mathbf{G}_{\theta_t}\left(x\right)\right), \nu\left(y\right)\right) \equiv S_{C,\varepsilon}(\mu_{\theta_t}, \nu)$. Using Taylor's expansion,
\begin{equation}
\begin{aligned}\label{sup:eq24}
    l(\theta_{t+1}) = l(\theta_t) + \nabla l(\theta_t)(\theta_{t+1} - \theta_t) 
    + \frac{1}{2}(\theta_{t+1} - \theta_t)^T \nabla^2 l(\theta_t)(\theta_{t+1} - \theta_t)
\end{aligned}    
\end{equation}
Now, substituting $\theta_{t+1} - \theta_t$ from equation \ref{sup:eq23}, and using triangle inequality and Cauchy-Schwarz inequality, equation \ref{sup:eq24} can be rewritten as,
\begin{equation}
\begin{aligned}\label{sup:eq25}
     l(\theta_{t+1}) \leq &
    l(\theta_t)\!- h_t||\nabla l(\theta_t)||^2\! - h_t||\nabla l(\theta_t)||.||\nabla S_{C,\varepsilon}(\mu_{\theta_t}, \nu)|| - h_t||\nabla l(\theta_t)||.||g(\psi;\mathbf{G}_{\theta_t}(x))|| \\
    &+ h_t^2||\nabla(l(\theta_t) + S_{C,\varepsilon}(\mu_{\theta_t}, \nu) - g(\psi;\mathbf{G}_{\theta_t}(x)))||^2 \frac{||\nabla^2 l(\theta_t)||}{2}.
    \end{aligned}    
\end{equation}
Taking into account the assumptions in \textbf{Theorem 3} and utilizing Minkowski's inequality, equation \ref{sup:eq25} can be rewritten as,
\begin{equation}
\begin{aligned}\label{sup:eq26}
    l(\theta_{t+1}) \leq &
    l(\theta_t) - h_t||\nabla l(\theta_t)||^2 - h_t||\nabla l(\theta_t)||\eta 
    -  h_t||\nabla l(\theta_t)||\zeta \\ 
    & + h_t^2 (||\nabla(l(\theta_t)||^2 + ||S_{C,\varepsilon}(\mu_{\theta_t}, \nu)||^2 
     + ||g(\psi;\mathbf{G}_{\theta_t}(x)))||^2) \frac{\beta_1}{2}.
\end{aligned}    
\end{equation}
Using $h_t = \frac{1}{\beta_1}$, from equation \ref{sup:eq26}, we can write,
\begin{equation}
\begin{aligned}
    l(\theta_{t+1}) &\leq  l(\theta_t) - \frac{h_t||\nabla l(\theta_t)||^2}{2} - h_t||\nabla l(\theta_t)||\eta -  h_t||\nabla l(\theta_t)||\zeta \\
    &+ \frac{h_t||S_{C,\varepsilon}(\mu_{\theta_t}, \nu)||^2}{2} + \frac{h_t||g(\psi;\mathbf{G}_{\theta_t}(x)))||^2}{2}\\
    & \leq l(\theta_t) - \frac{h_t \epsilon_1^2}{2} - h_t\epsilon_1 \eta -h_t\epsilon_1\zeta
    + \frac{h_tL^4\Gamma^2\epsilon^2}{2}+\frac{h_tL^2\delta^2}{2}. 
\end{aligned}    
\end{equation}
Assuming T iterations to reach this $\epsilon_1$-stationary point, then for $t\leq T$, doing telescopic sum over $t$,
\begin{equation}
\begin{aligned}
    \sum_{t=0}^{T-1} l(\theta_{t+1}) -&  l(\theta_{t})  \leq \frac{-T(\epsilon_1^2+2\epsilon_1(\zeta+\eta) - L^2(\delta^2+L^2\Gamma^2\epsilon^2))}{2\beta_1}\\
    &\Rightarrow T\leq \frac{2(l(\theta_0) - l^*)\beta_1}{(\epsilon_1^2+2\epsilon_1(\zeta+\eta) - L^2(\delta^2+L^2\Gamma^2\epsilon^2))}
\end{aligned}    
\end{equation}
Therefore, using the iteration complexity definition of \cite{501}, we obtain,
\begin{equation}
\begin{aligned}
    \sup_{\theta_0\in \{\mathbb{R}^{h\times d_x},\mathbb{R}^{d_y\times h}\},l\in \mathscr{L}} \mathcal{T}_{\epsilon_1}(A_h[l,\theta_0],l) 
    = \mathcal{O}\left(\frac{(l(\theta_0) - l^*)\beta_1}{\epsilon_1^2+2\epsilon_1(\zeta+\eta) - L^2(\delta^2+L^2\Gamma_{\varepsilon}^2\epsilon^2)}\right).
\end{aligned}
\end{equation}
This completes the proof of \textbf{Theorem 3}.
\subsection{Simplified Theorem 3}\label{appendC.1}
\begin{corollary}
    Using first order Taylor series, the upper bound in \textbf{Theorem 3} becomes $\mathcal{O}(\frac{l(\theta_0) - l^*}{\epsilon_1^2+\epsilon_1(\zeta+\eta)})$.
\end{corollary}
Using the similar arguments of \textbf{Theorem 3}, and taking first-order Taylor's approximation, we get
\begin{equation}
\begin{aligned}
    l(\theta_{t+1}) &= l(\theta_{t}) - h_t||\nabla l(\theta_t)||^2 - h_t||\nabla l(\theta_t)||.||\nabla S_{C,\varepsilon}(\mu_{\theta_t}, \nu)|| - h_t||\nabla l(\theta_t)||.||g(\psi;\mathbf{G}_{\theta_t}(x))||\\
    & \leq l(\theta_t) - h_t \epsilon_1^2 - h_t\epsilon_1 \eta -h_t\epsilon_1\zeta
\end{aligned}    
\end{equation}
Taking telescopic sum over $t$ for $t\leq T$, we get 
\begin{equation}
     \sum_{t=0}^{T-1} l(\theta_{t+1}) - l(\theta_{t}) \leq -Th_t(\epsilon_1^2+\epsilon_1(\zeta+\eta))
\end{equation}
So, using the definition of iteration complexity, we get,
\begin{equation}
\begin{aligned}
    \sup_{\theta_0\in \{\mathbb{R}^{h\times d_x},\mathbb{R}^{d_y\times h}\},l\in \mathscr{L}} \mathcal{T}_{\epsilon_1}(A_h[l,\theta_0],l) 
    = \mathcal{O}\left(\frac{l(\theta_0) - l^*}{\epsilon_1^2+\epsilon_1(\zeta+\eta)}\right)
\end{aligned}
\end{equation}
This completes the proof.
\end{proof}
Since the denominator of the derived upper bound in \textbf{Theorem 3} is greater than the one mentioned in Theorem 3 by \citet{501}, we can infer that our proposed learning framework has tighter iteration complexity compared to the regular adversarial setup. This is only true when $\Gamma_{\varepsilon}<\frac{\sqrt{2\epsilon_1 \eta}}{L^2\epsilon}$, and it can be easily satisfied by appropriate choice of $\varepsilon$. It can also be verified using a simpler setup as shown in \textbf{Corollary 1}, as it increases the convergence rate from $\mathcal{O}((\epsilon_1^2 + \epsilon_1 \zeta)^{-1})$~\citep{501} to $\mathcal{O}((\epsilon_1^2 + \epsilon_1(\zeta+\eta))^{-1})$ which effects considerably in the training iterations. This also suggests that our setup is equivalent to applying two discriminator operations without involving additional computations which helps in effectively leveraging the advantage of using multiple discriminators~\citep{502}. Experiment result  to support this proposed theorem is carried out in Appendix \S\ref{final}.

\section{Appendix: Details of used losses}\label{loss_def}
Besides classification, discriminator $\mathbf{D}$ has another major functional branch, $(\mathbf{D_{SA}})$, to approximate the spatial attention maps. For any input $m$, $\mathbf{D_{SA}}$ is used to estimate its corresponding normalized spatial feature maps, $D_{SA}: \mathbb{R}^{H \times W} \rightarrow[0,1]^{H \times W}$. Let $\mathbf{D}$ consists of $k$ DMRBs and $a_i$ be the activation maps after $i^{th}$ DMRB with $c$ output channels, such that $a_i \in \mathbb{R}^{H \times W \times c}$. Since at different depths, the discriminator focuses on different features, we select $k$ different attention maps from the layers in the latent space. Eventually, attention coefficients are calculated as in \cite{28}:
\begin{equation}\label{eq1}
\centering
\mathbf{D}_{SA}(m) =\sum_{i=1}^{k} \sum_{j=1}^{c}|a_{i j}(m)|
\end{equation}
Then the domain adaptation loss $(\mathscr{L}_{D A})$ mentioned in \S \ref{loss},is defined as,
\begin{equation}
\mathscr{L}_{D A}=\mathbb{E}_{\tilde{x} \sim \mathbb{P}_{\tilde{x}}, y \sim \mathbb{P}_{y}}\left[\left\|\mathbf{D}_{S A}(\tilde{x})-\mathbf{D}_{S A}(y)\right\|_2^2\right].
\end{equation}
where, $y$ is ground truth DEM and $\tilde{x}$ is bicubic interpolated coarse SRTM DEM as mentioned in \S \ref{method}. The pixel loss $(\mathscr{L}_{P})$ and SSIM loss $(\mathscr{L}_{str})$ and adversarial loss $(\mathscr{L}_{ADV})$ are defined as,
\begin{equation} \label{eq88}
\begin{aligned}
&\mathscr{L}_{P}=\mathbb{E}_{\tilde{x} \sim \mathbb{P}_{\tilde{x}}, z \sim \mathbb{P}_{Z}, y \sim \mathbb{P}_{y}}\left[\left\|y-\mathbf{G}(\tilde{x}, z\odot A_s(\tilde{x}))\right\|_2^2\right],\\
&\mathscr{L}_{str} = \mathbb{E}_{\tilde{x} \sim \mathbb{P}_{\tilde{x}}, z \sim \mathbb{P}_{Z}, y \sim \mathbb{P}_{y}} -\log (\mathbf{SSIM}(\mathbf{G}(\tilde{x}, z\odot A_s(\tilde{x})), y)),\\
&\mathscr{L}_{ADV} = \mathbb{E}_{\tilde{x} \sim \mathbb{P}_{\tilde{x}}, z \sim \mathbb{P}_{Z}} -\log (\mathbf{D}(\mathbf{G}(\tilde{x}, z\odot A_s(\tilde{x})))).
% & \text{where, } \hat{y} = \mathbf{G}(\tilde{x}, z\odot A_s(\tilde{x})).
\end{aligned}
\end{equation}
where, $A_s(\tilde{x}) = PSA(\mathbf{D}_{SA}(\tilde{x}))$ with PSA being polarized self-attention as discussed in \S \ref{method}
\section{Appendix: Implementation Details}\label{impl_det}
Every experiment is conducted under identical environments. We use $3\times 3$ convolution kernel and leaky ReLU activation except in global skip connection where $1\times 1$ kernel is used without activation. Each DMRB has 64 convolution operations. We use ADAM optimizer with a fixed learning rate of 0.0001. During adversarial training, we update the critic once every single update in the generator. We set $\lambda_{DA} = 0.1$, $\lambda_P=100$, $\lambda_{str}=1$, $\lambda_{ADV}=1$ and $\lambda_{OT}=0.01$. For estimating $\mathscr{L}_{OT}$, we set $T=10$ and $\varepsilon=0.1$. The entire framework is developed using PyTorch
\section{Appendix: Ablation study}\label{abl}
In this section, we analyze the efficacy of our four proposed modules, image prior, spatial attention from discriminator, PSA, and Sinkhorn loss-based adversarial learning. In the absence of any prior, it leads to outcomes similar to bicubic interpolation. This results in low RMSE and MAE, compared to other cases as shown in Table \ref{tab:my-table2}. However, that affects its SSIM score as shown in Table \ref{tab:my-table2}. Using high-resolution MX image priors resolves this issue to a certain extent by increasing SSIM by more than 6\%. Yet, this degrades the performance of the model in terms of RMSE and MAE. To improve upon this situation, we introduce spatial feature maps from the discriminator. Figure \ref{fig5} shows how individual attention maps after each DMRB prioritizes certain features at different labels. However, the mean attention weights are approximately uniform as shown in Figure \ref{fig6}. PSA handles this matter by emphasizing key features. Addressing both these attentions improves the RMSE and MAE by more than 50\% as well as enhances SSIM by almost 7\% as shown in Table \ref{tab:my-table2}. The introduction of Sinkhorn distance regularization enhances the evaluation parameters further with both MAE and RMSE below 10 m and SSIM more than 90\% as shown in Table \ref{tab:my-table2}. Apart from this, sinkhorn loss also contributed to a near 2.5X faster convergence rate for the pixel loss as shown in Figure \ref{fig7}. This also supports our argument in \textbf{Theorem 3}.
\begin{table}[]
\centering
\vspace{0.35cm}
\caption{Quantitative analysis for the effect of introducing different modules for DEM super-resolution}
\vspace{0.1cm}
\label{tab:my-table2}
\resizebox{\textwidth}{!}{%
\begin{tabular}{cccc|cccc}
\hline\hline
\textbf{\begin{tabular}[c]{@{}c@{}}Image\\ Prior\end{tabular}} &
  \textbf{\begin{tabular}[c]{@{}c@{}}Spatial\\ Attention\end{tabular}} &
  \textbf{PSA} &
  \textbf{\begin{tabular}[c]{@{}c@{}}Sinkhorn\\ loss\end{tabular}} &
  \textbf{RMSE (m)} &
  \textbf{MAE (m)} &
  \textbf{SSIM (\%)} &
  \textbf{PSNR} \\ \hline
 \xmark & \xmark & \xmark & \xmark & 16.54         & 13.63         & 72.27          & 30.25          \\
 \cmark & \xmark & \xmark & \xmark & 29.32         & 25.41         & 78.29          & 28.25          \\
 \cmark & \cmark & \xmark & \xmark & 20.76         & 18.29         & 81.68          & 31.08          \\
 \cmark & \cmark & \cmark & \xmark & 18.76         & 15.13         & 85.04          & 32.21          \\
 \cmark & \cmark & \cmark & \cmark & \textbf{9.28} & \textbf{8.51} & \textbf{90.49} & \textbf{35.06} \\ \hline \hline
\end{tabular}%
}
\end{table}
\begin{figure}[!htb]
\centering
\minipage{0.45\textwidth}
  \includegraphics[width=\linewidth]{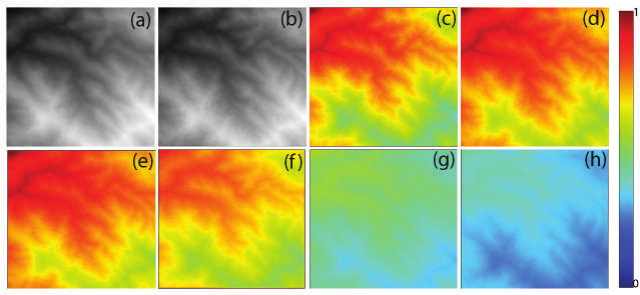}
  \caption{(a) Source, (b) Target, (c)-(h) Discriminator spatial attention after each DMRB from top to bottom.}\label{fig5}
\endminipage \quad
\minipage{0.45\textwidth}
  \includegraphics[width=\linewidth]{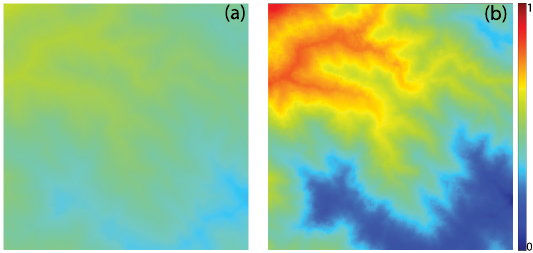}
  \caption{(a) Weights of mean discriminator spatial attention ($\textbf{D}_{SA}$), (b) weights after passing $\textbf{D}_{SA}$ through PSA block)}\label{fig6}
\endminipage \vspace{0.5cm}
\minipage{0.5\textwidth}
  \includegraphics[width=\linewidth]{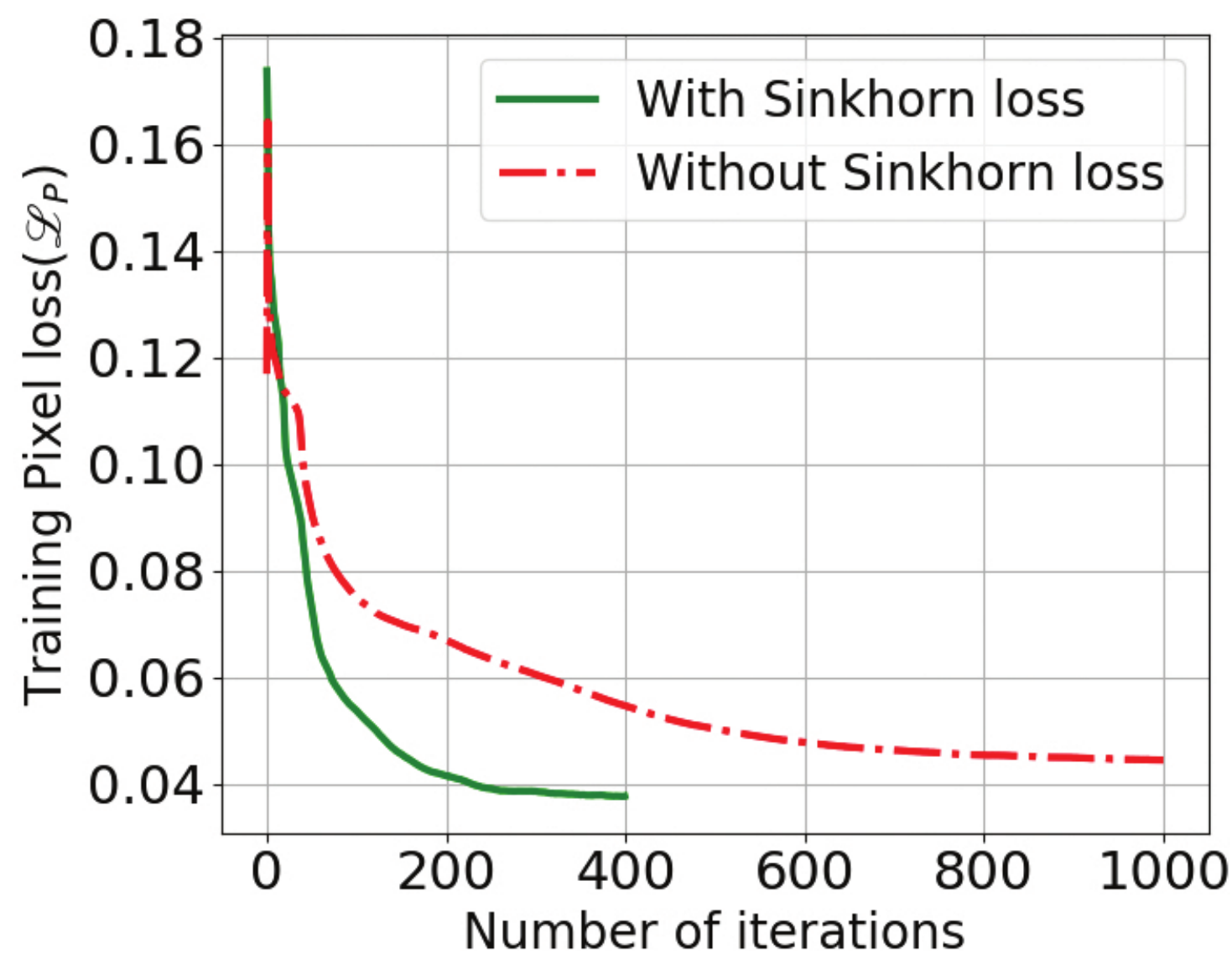}
  \caption{Effect of sinkhorn loss in training convergence for our model}\label{fig7}
  \endminipage
\end{figure}
We also perform ablation study related to different loss functions as shown in Table \ref{tab:my-table3}. This can also be visualized from \ref{fig300}. Clearly, introduction of Sinkhorn loss is pivotal in SR performance (3 dB gain). Other losses also contribute to the overall performance.
\begin{table}[]
\centering
\caption{Quantitative analysis on the effect of different losses}
\label{tab:my-table3}
\resizebox{\columnwidth}{!}{%
\begin{tabular}{cccc|cccc}
\hline
Pixel loss & SSIM loss & Adversarial Loss & Sinkhorn loss & RMSE (m) & MAE (m) & SSIM (\%) & PSNR  \\ \hline
\cmark           &  \xmark         &  \xmark                & \xmark              & 20.68    & 19.03   & 74.43     & 31.07 \\
     \cmark      &  \cmark         &  \xmark                &  \xmark             & 25.76    & 22.83   & 81.55     & 31.41 \\
     \cmark      & \cmark          & \cmark                 &   \xmark            & 18.76    & 15.13   & 85.04     & 32.21 \\
         \cmark  &  \cmark         & \cmark                 &  \cmark            & \textbf{9.28}     & \textbf{8.51}    & \textbf{90.49}     & \textbf{35.06} \\ \hline
\end{tabular}%
}
\vspace{-0.4cm}
\end{table}
\begin{figure}[]
    \centering
    \includegraphics[width=0.85\linewidth]{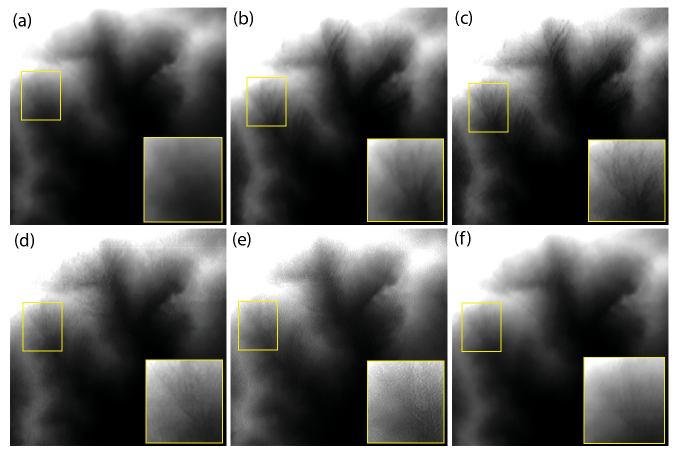}
    \caption{(a) Bicubic LR DEM, (b) HR GT DEM, predicted HR DEM corresponding to (c) all losses, (d) pixel + SSIM + adversarial loss, (e) pixel + SSIM loss, and (f) only pixel loss.}
    \label{fig300}
\end{figure}
\section{Appendix: Empirical results related to Theorem 2 and Theorem 3}\label{final}
We perform experiments to answer the proposed claims. There are two main aspects we want to investigate, firstly, how the choice of $\varepsilon$ affects the overall training of the model, and secondly, how it performs compared to other state-of-the-art learning methods like WGAN, WGAN+GP, and DCGAN. In both these cases,  we analyze the claims of mitigating vanishing gradients in the near-optimal region and fast convergence rate.
\subsection{Experiment set-up}
In this setting, we are performing a denoising operation on the MNIST dataset. For this 60000 samples of size $28\times 28$ are used during training, while 10000 are used for testing. The convergence criterion is set to be the mean square error of 0.04
or a maximum of 500 epochs. During training, we randomly add Gaussian noise to the training samples to perform the denoising task. The generator is designed as a simple autoencoder structure with an encoder and decoder each having 2 convolutional layers.
In practice, we notice that a discriminator with shallow layers is usually sufficient to offer a higher convergence rate. Therefore, we choose, a three-layer fully connected network with 1024 and 256 hidden neurons. All the layers are followed by ReLu activation except the output layer. For optimization, ADAM is utilized with a learning rate of 0.001 with a batch size of 64, and the discriminator is updated once for every single update of the generator.

\subsection{Result analysis}
Figure \ref{fig55}, shows how changing the value of $\varepsilon$ affects the overall iteration complexity. According to this figure, the instances $\varepsilon$ are very small and very large, and the learning behavior of the model becomes close to regular adversarial setup which ultimately results in more time requirement for convergence. This is because, as $\varepsilon \to 0$ and $\varepsilon \to \infty$, the smoothness of sinkhorn loss tends to become independent of $\varepsilon$ as depicted in \textbf{Theorem 1}, which makes the overall setup similar to the regular adversarial framework. This also affects the capability of mitigating the vanishing gradient problem as shown in Figure \ref{fig66} and \ref{fig77}. The gradients are approximated using spectral norm and they are moving averaged for better visualization. From Figure \ref{fig66}, in the case of the first layer, as $\varepsilon$ varies, the estimated gradients are similar near the optimal region. However, From Figure \ref{fig77}, we can see for the case of the hidden layer, gradient approximation is definitely affected by the choice of $\varepsilon$, and we can see as $\varepsilon \to 0$ and $\varepsilon \to \infty$, the gradients near-optimal region become smaller. However, using $\varepsilon=0.1$ tends to have higher gradients even if near the optimal region. Therefore, this model will have more capability of mitigating the vanishing gradient problem. Hence, we use this model to compare with other state-of-the-art learning methods.\par
We compare the rate of convergence and capability of handling the vanishing gradient of SIRAN with WGAN~\citep{300}, WGAN+GP~\citep{500}, and DCGAN. Figure \ref{fig555} clearly visualizes how our proposed framework has tighter iteration complexity than others, and reaches the convergence faster. This is consistent with the theoretical analysis presented in \textbf{Theorem 3}. Figure \ref{fig666} and \ref{fig777} also provides empirical evidence of the vanishing gradient issue presented in \textbf{Theorem 2}. Both for the first layer and hidden layer, as shown in Figure \ref{fig666} and \ref{fig777}, the approximated gradients are higher comparatively than others near the optimal region. This results in increasing the effectiveness of SIRAN in handling the issue of the vanishing gradient problem as discussed in above theorems.
\begin{figure}[!htb]
\centering
\minipage{0.32\textwidth}
  \includegraphics[width=\linewidth]{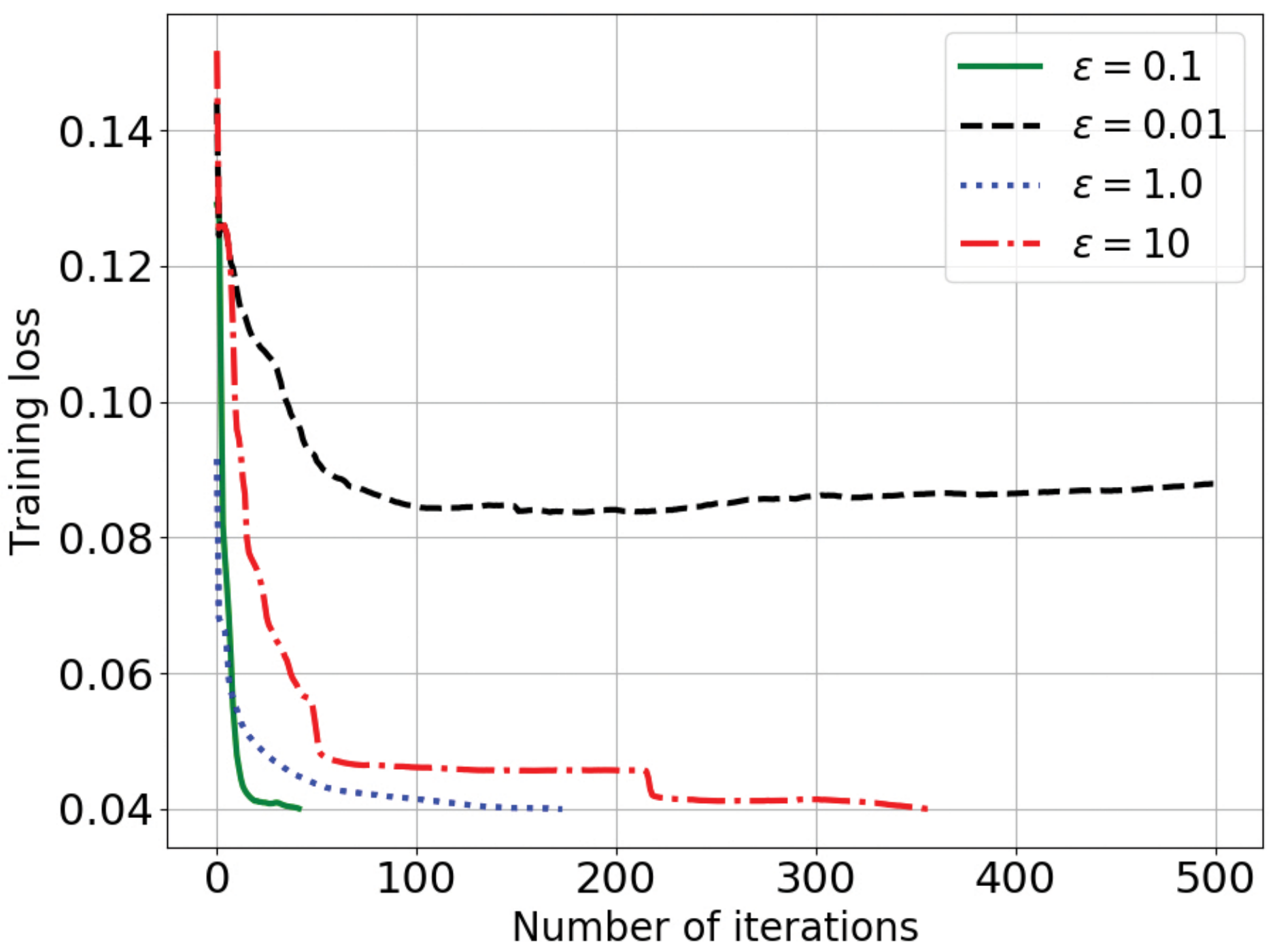}
  \caption{Training Loss for variation of $\varepsilon$}\label{fig55}
\endminipage \quad
\minipage{0.32\textwidth}
  \includegraphics[width=\linewidth]{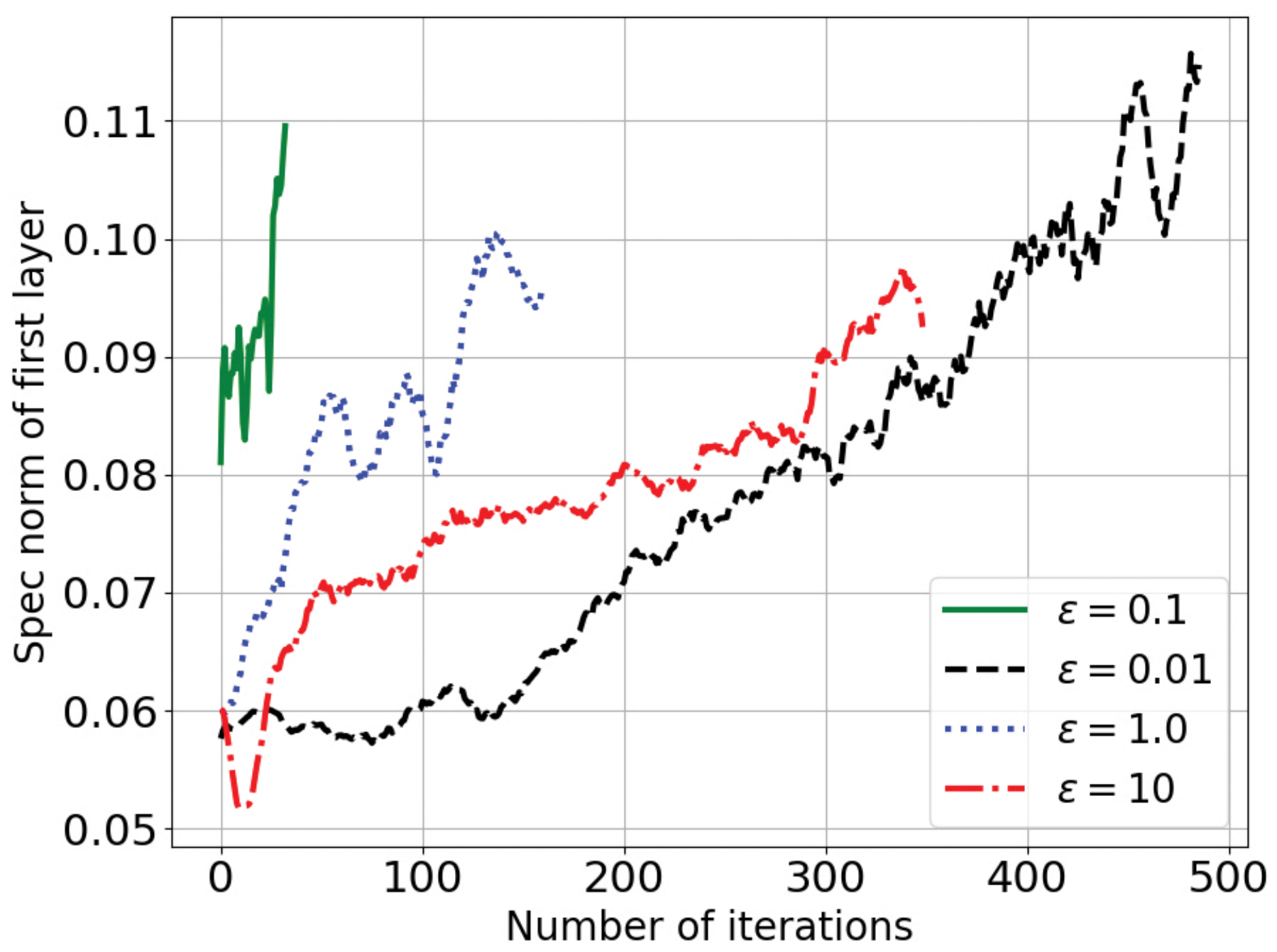}
  \caption{Approximated Spectral norm of gradients of first layer for different values of $\varepsilon$}\label{fig66}
\endminipage \quad
\minipage{0.3\textwidth}
  \includegraphics[width=\linewidth]{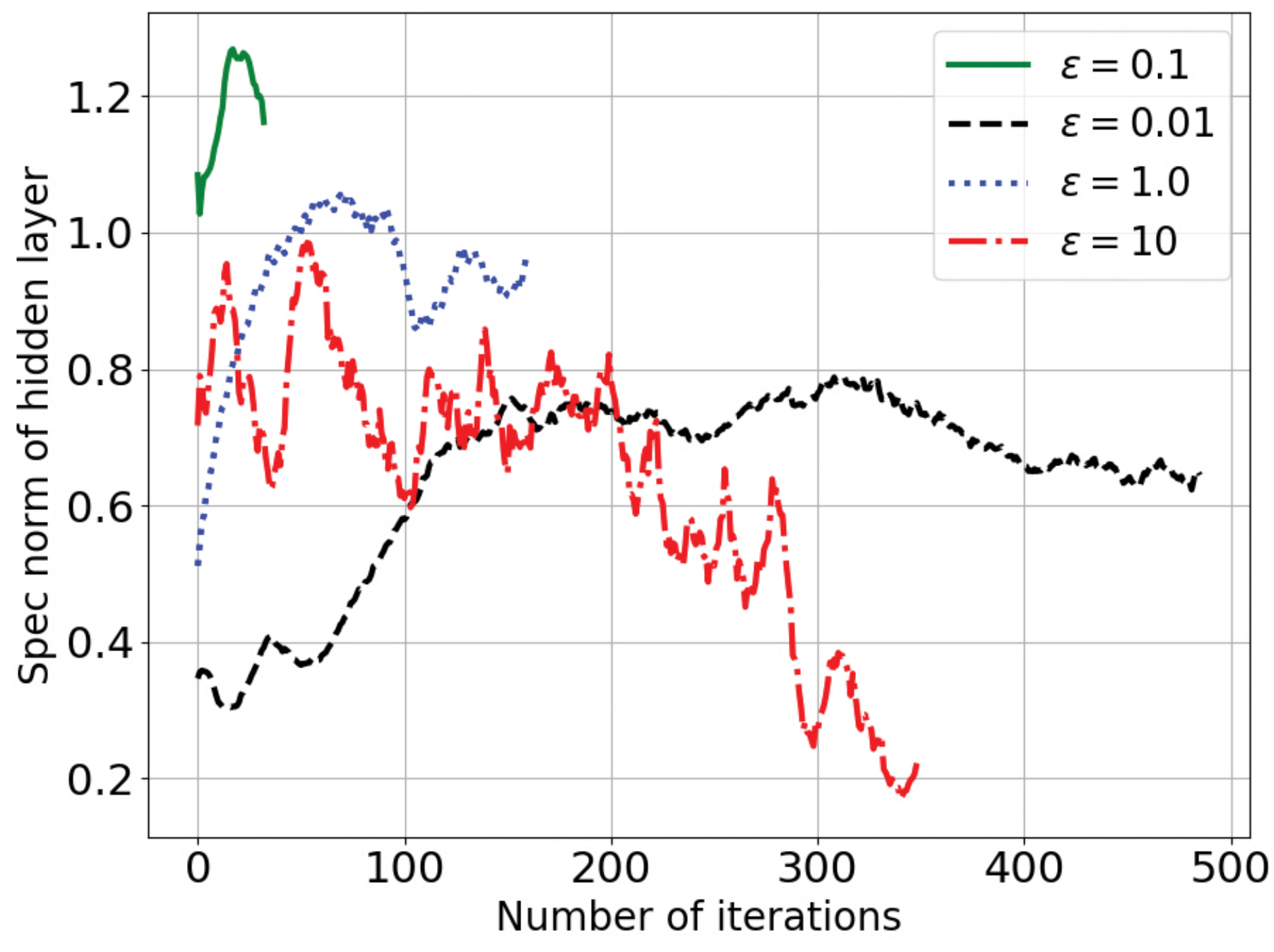}
  \caption{Approximated Spectral norm of gradients of hidden layer for different values of $\varepsilon$}\label{fig77}
  \endminipage
\end{figure}

\begin{figure}[!htb]
\centering
\minipage{0.32\textwidth}
  \includegraphics[width=\linewidth]{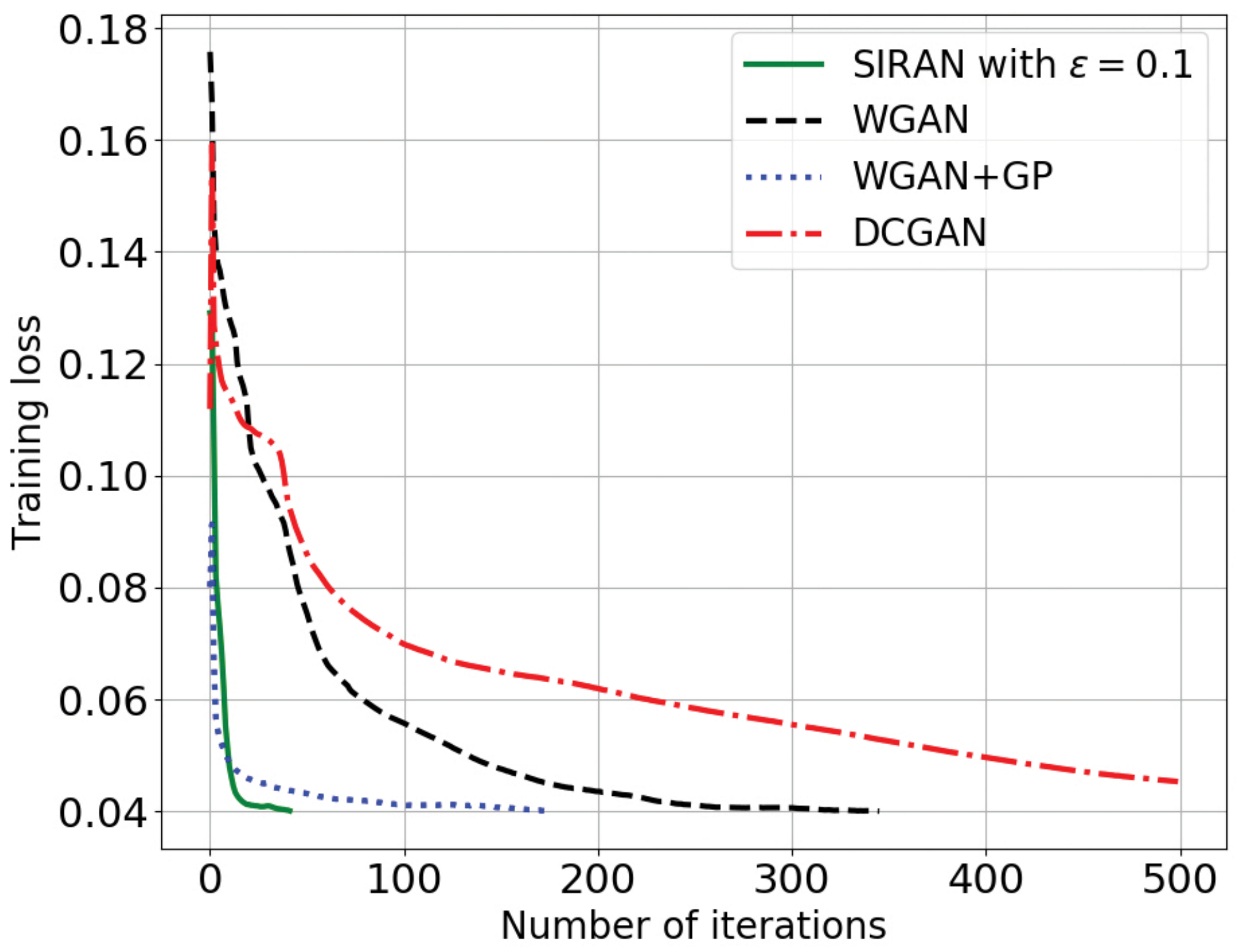}
  \caption{Training Loss for different learning methods}\label{fig555}
\endminipage \quad
\minipage{0.32\textwidth}
  \includegraphics[width=\linewidth]{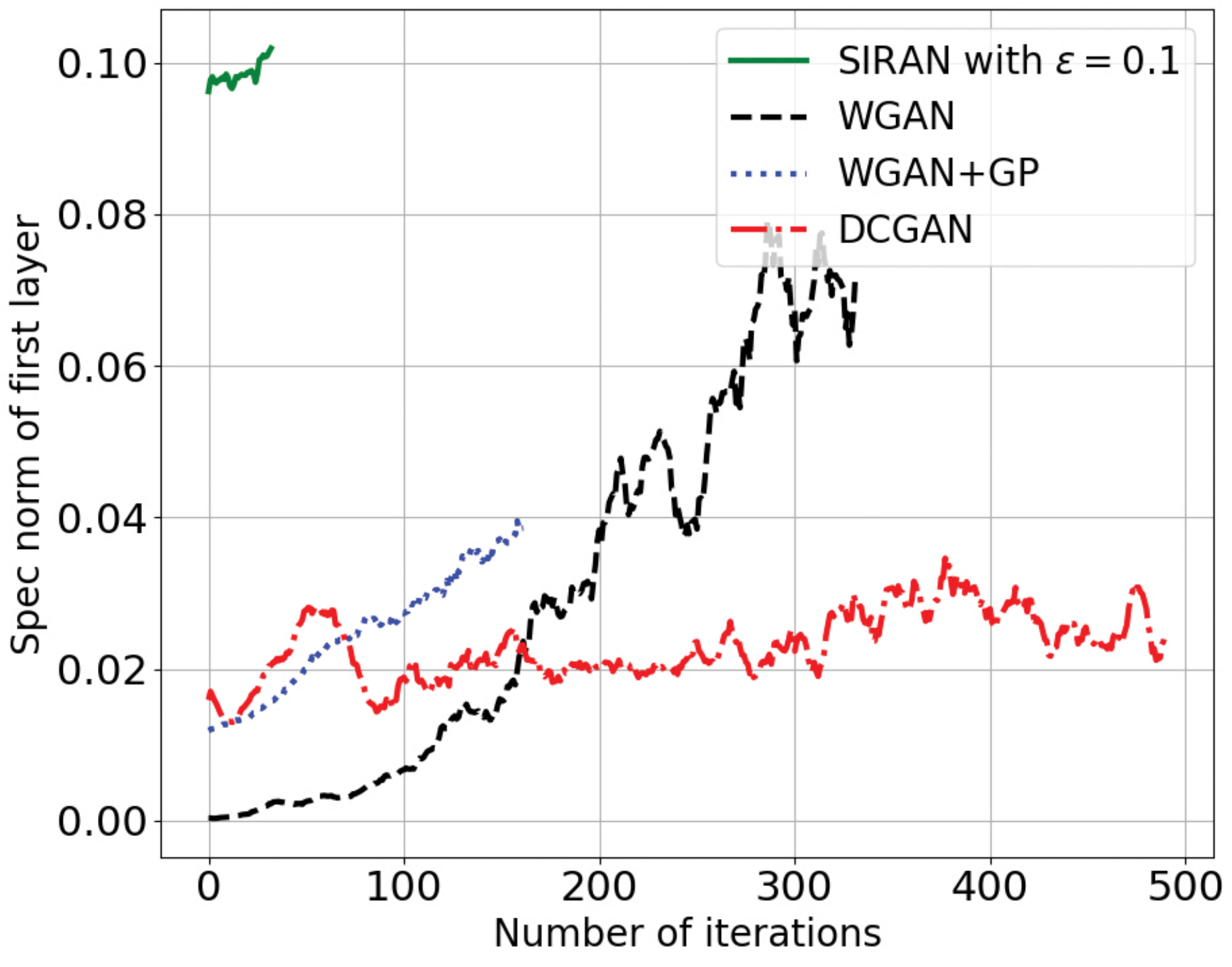}
  \caption{Approximated Spectral norm of gradients of first layer for different learning methods}\label{fig666}
\endminipage \quad
\minipage{0.3\textwidth}
  \includegraphics[width=\linewidth]{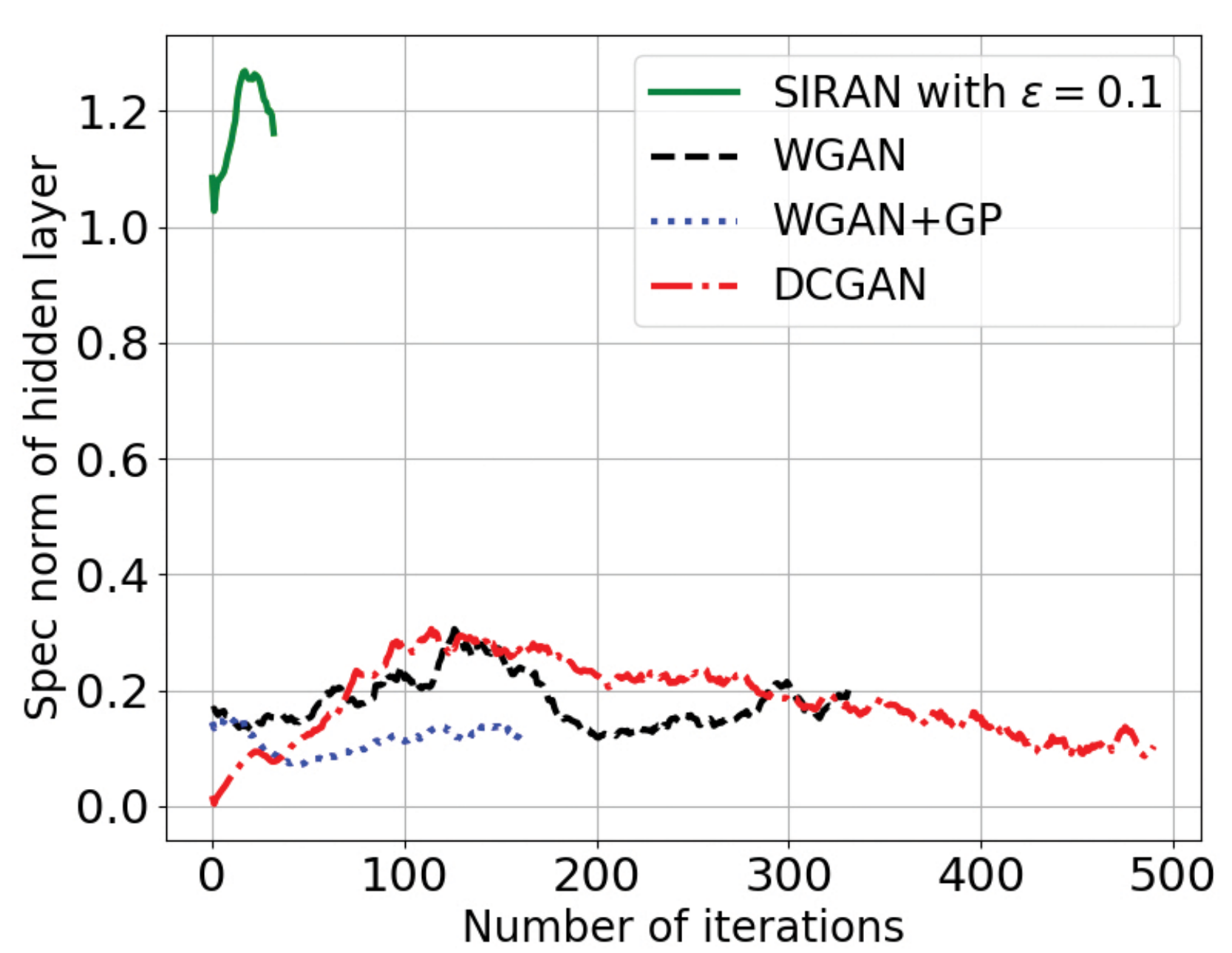}
  \caption{Approximated Spectral norm of gradients of hidden layer for different learning methods}\label{fig777}
  \endminipage
\end{figure}

\end{document}